%% LyX 2.1.0 created this file.  For more info, see http://www.lyx.org/.
%% Do not edit unless you really know what you are doing.
\documentclass[10pt]{article}
\usepackage[latin9]{inputenc}
\usepackage{geometry}
\geometry{verbose,tmargin=3.5cm,bmargin=3.5cm,lmargin=2.5cm,rmargin=2.5cm}
\usepackage{float}
\usepackage{bm}
\usepackage{amsthm}
\usepackage{amsmath}
\usepackage{amssymb}
\usepackage{graphicx}

\makeatletter

%%%%%%%%%%%%%%%%%%%%%%%%%%%%%% LyX specific LaTeX commands.
%% Because html converters don't know tabularnewline
\providecommand{\tabularnewline}{\\}
\floatstyle{ruled}
\newfloat{algorithm}{tbp}{loa}
\providecommand{\algorithmname}{Algorithm}
\floatname{algorithm}{\protect\algorithmname}

%%%%%%%%%%%%%%%%%%%%%%%%%%%%%% Textclass specific LaTeX commands.
\theoremstyle{plain}
\newtheorem{thm}{\protect\theoremname}
  \theoremstyle{plain}
  \newtheorem{lem}[thm]{\protect\lemmaname}

\@ifundefined{date}{}{\date{}}
%%%%%%%%%%%%%%%%%%%%%%%%%%%%%% User specified LaTeX commands.
\usepackage{framed}
\usepackage{amssymb,amsfonts,amsmath}
\usepackage{amsfonts,amsmath}
\usepackage{graphicx}
\usepackage{subfigure,enumerate,bm,amsmath,amsthm,wrapfig,array,color,algorithmic}
\usepackage{graphics}
\usepackage{url}
\usepackage{multirow}
\usepackage{setspace}

%\titlespacing{\section}{0pt}{*0}{*0}
%\titlespacing{\subsection}{0pt}{*0}{*0}
%\titlespacing{\subsubsection}{0pt}{*0}{*0}

% American letter size:
%\textwidth6.5in \textheight9in \oddsidemargin 0pt \evensidemargin 0pt
%\topmargin -47pt

\usepackage{times}

\newenvironment{assumption}[2][Assumption]{\begin{trivlist}
\item[\hskip \labelsep {\bfseries #1}\hskip \labelsep {\bfseries #2}]}{\end{trivlist}}

%\newtheorem{assumption}[theorem]{Assumption}

%\newcommand{\sq}{\hbox{\rlap{$\sqcap$}$\sqcup$}}
%\renewcommand{\qed}{\hspace*{\fill}\sq}
%\renewenvironment{proof}{\noindent {\bf Proof.}\ }{\qed\par\vskip 4mm\par}
%\newenvironment{proofof}[1]{\bigskip \noindent {\bf Proof of #1:}\quad }{\qed\par\vskip 4mm\par}

%\linespread{0.93}

%\renewcommand\baselinestretch{1}
%\setlength\abovedisplayskip{0pt}
%\setlength\abovedisplayshortskip{0pt}	
%\setlength\belowdisplayskip{0pt}
%\setlength\belowdisplayshortskip{0pt}	

\makeatother

  \providecommand{\lemmaname}{Lemma}
\providecommand{\theoremname}{Theorem}

\begin{document}

\title{Revisiting Asynchronous Linear Solvers: Provable Convergence Rate
Through Randomization%
\thanks{An extended abstract of this work appears in the proceedings of the 28th IEEE International Parallel \& Distributed
Processing Symposium (IPDPS), 2014~\cite{ADG14_IPDPS}.%
}}

\author{Haim Avron \\
 IBM T.J. Watson Research Center\\
 haimav@us.ibm.com\\
 \and Alex Druinsky \\
 Lawrence Berkeley National Laboratory \\
adruinsky@lbl.gov \\
\and Anshul Gupta \\
IBM T.J. Watson Research Center \\
 anshul@us.ibm.com}
\maketitle
\begin{abstract}
Asynchronous methods for solving systems of linear equations have
been researched since Chazan and Miranker's pioneering 1969 paper
on chaotic relaxation. The underlying idea of asynchronous methods
is to avoid processor idle time by allowing the processors to continue
to make progress even if not all progress made by other processors
has been communicated to them.

Historically, the applicability of asynchronous methods for solving linear equations 
was limited to certain restricted classes of matrices, such as diagonally dominant
matrices. Furthermore, analysis of these methods
focused on proving convergence in the limit. Comparison of the asynchronous
convergence rate with its synchronous counterpart and its scaling
with the number of processors were seldom studied, and are still not
well understood. 

In this paper, we propose a randomized shared-memory asynchronous method for general
symmetric positive definite matrices. We rigorously analyze the convergence
rate and prove that it is linear, and is close to that of the method's
synchronous counterpart if the processor count is not excessive relative
to the size and sparsity of the matrix. We also present an algorithm
for unsymmetric systems and overdetermined least-squares. Our work
presents a significant improvement in the applicability of asynchronous linear solvers
as well as in their convergence analysis, and suggests
randomization as a key paradigm to serve as a foundation for asynchronous
methods.
\end{abstract}

\global\long\def\R{\mathbb{R}}

\global\long\def\e{{\mathbf{e}}}

\global\long\def\x{{\mathbf{x}}}

\global\long\def\d{{\mathbf{d}}}

\global\long\def\b{{\mathbf{b}}}

\global\long\def\y{{\mathbf{y}}}

\global\long\def\z{{\mathbf{z}}}

\global\long\def\rb{{\mathbf{r}}}

\global\long\def\mat#1{{\ensuremath{\bm{\mathrm{#1}}}}}

\global\long\def\matN{\ensuremath{{\bm{\mathrm{N}}}}}

\global\long\def\matA{\ensuremath{{\bm{\mathrm{A}}}}}

\global\long\def\matP{\ensuremath{{\bm{\mathrm{P}}}}}

\global\long\def\matU{\ensuremath{{\bm{\mathrm{U}}}}}

\global\long\def\matM{\ensuremath{{\bm{\mathrm{M}}}}}

\global\long\def\matR{\mat R}

\global\long\def\matS{\mat S}

\global\long\def\matY{\mat Y}

\global\long\def\matI{\mat I}

\global\long\def\matJ{\mat J}

\global\long\def\matZ{\mat Z}

\global\long\def\matL{\mat L}

\global\long\def\TNormS#1{\|#1\|_{2}^{2}}

\global\long\def\TNorm#1{\|#1\|_{2}}

\global\long\def\InfNorm#1{\|#1\|_{\infty}}

\global\long\def\FNorm#1{\|#1\|_{F}}

\global\long\def\UNorm#1{\|#1\|_{\matU}}

\global\long\def\UNormS#1{\|#1\|_{\matU}^{2}}

\global\long\def\UINormS#1{\|#1\|_{\matU^{-1}}^{2}}

\global\long\def\ANorm#1{\|#1\|_{\matA}}

\global\long\def\BNorm#1{\|#1\|_{\mat B}}

\global\long\def\ANormS#1{\|#1\|_{\matA}^{2}}

\global\long\def\XNormS#1{\|#1\|_{\mat X}^{2}}

\global\long\def\AINormS#1{\|#1\|_{\matA^{-1}}^{2}}

\global\long\def\T{\textsc{T}}

\global\long\def\Expect#1{{\mathbb{E}}\left[#1\right]}

\global\long\def\ExpectC#1#2{{\mathbb{E}}_{#1}\left[#2\right]}

\global\long\def\dotprod#1#2#3{(#1,#2)_{#3}}

\global\long\def\dotprodsqr#1#2#3{(#1,#2)_{#3}^{2}}

\global\long\def\Trace#1{{\bf Tr}\left(#1\right)}

\global\long\def\nnz#1{{\bf nnz}\left(#1\right)}

\section{Introduction}

It has long been recognized that high global synchronization costs
will eventually limit the scalability of iterative solvers. So early
on, starting with the pioneering work of Chazan and Miranker on \emph{chaotic
relaxation} in 1969~\cite{CM69} (see review by Frommer and Szyld~\cite{FS00}),
asynchronous methods have been researched and deployed. These methods
avoid synchronization points and their associated costs by allowing
processors to continue to work even if not all progress made by other
processors has been communicated to them.

While asynchronous methods were successfully applied to many numerical
problems~\cite{FS00}, interest in them dwindled over the years.
One important reason is that until recently, concurrency was not large
enough to warrant the use of asynchronous methods, as asynchronous
methods typically require more computation when compared to their
synchronous counterparts. Other reasons are related to the limits
of existing theory on asynchronous methods. Historically, the 
applicability of asynchronous methods for solving linear equations 
was limited to restricted classes of matrices, such as diagonally dominant
matrices. This had a substantially negative impact on the relevance and 
interest in asynchronous methods, as most of the matrices arising 
in applications did not posses the required attributes.  
Furthermore, analysis of asynchronous methods for solving linear equations 
focused on proving
convergence in the limit. How the rate of convergence compares to
the rate of convergence of the synchronous counterparts, and how this
rate scales when the number of processors increases, was seldom studied
and is still not well understood. It was observed experimentally that
asynchronous methods can sometimes be substantially slower than their
synchronous counterparts~\cite{BBDH12}. 

Today, as we push towards extreme scale systems, asynchronous algorithms
are becoming more and more attractive. In addition to the high synchronization
costs due to massive parallelism, other hardware issues make asynchronous
methods attractive as well. Current hardware trends suggest that software
running on extreme-scale parallel platforms will be expected to encounter
and be resilient to nondeterministic behavior from the underlying
hardware. Asynchronous methods are inherently well-suited to meet
this challenge. On the other hand, it is also clear that a paradigm
shift regarding the way asynchronous methods are designed and analyzed
must be made, if such methods are to be deployed. To that end, this
paper makes three significant contributions. It presents an asynchronous
solver with randomization as a key algorithmic component, a rigorous
analysis that affirms the role of randomization as an effective tool
for improving asynchronous solvers, and an analytical methodology
for asynchronous linear solvers based on a realistic bounded-delay
model.

Specifically, we present a new asynchronous shared-memory parallel
solver for symmetric positive definite matrices with a provable linear
convergence rate under a mostly asynchronous computational model that
assumes bounded delays. A key component of our algorithm is randomization,
which allows the processors to make progress independently with only
a small probability of interfering with each other. Our analysis shows
a convergence rate that is linear in the condition number of the matrix,
and depends on the number of processors and the degree to which the
matrix is sparse. A slightly better bound is achieved if we occasionally
synchronize the processors. In either case, as long as the number
of processors is not too large relative to the size and sparsity of
the matrix, the convergence rate is close to that of the synchronous
counterpart. Unlike in general asynchronous methods, the convergence rate
does not depend on numerical classification of the matrix (e.g., diagonal
dominance). In particular, our method will converge for essentially
\emph{any }large sparse symmetric positive definite matrix as long
as not too many processors are used. We also present an algorithm
for unsymmetric systems and overdetermined least-squares. 

Our method and its analysis do have some limitations. Adapting the algorithm
to the distributed memory setting is not straightforward. Our algorithm allows 
each processor to update all the entries of the solution vector, but in 
a distributed memory setting it is desirable that each processor
owns and be the sole updater of only a subset of the entries. To allow this, a more limited 
form of randomization should be used, and this is not explored in the paper.
Our algorithm also tends to generate much more cache misses than classical 
asynchronous methods for structured matrices. Again, it may be possible to circumvent
this using a more restricted form of randomization. More importantly, our
algorithm's convergence is inherently slower than that of Krylov-subspace
methods, which is a feature of the underlying synchronous algorithm. For this
reason, the algorithm is most suitable when only moderate accuracy is sought,
either when we require low accuracy in the ultimate solution or when we use the
algorithm as a preconditioner in a flexible Krylov method.
Our algorithm relies on some assumptions that are hard
to enforce in practice, and the convergence results have a parameter
which is hard to quantify better than by just a very rough upper bound.
Finally, we remark that bounds tend to be rather pessimistic (this is true 
also for the synchronous algorithm that is the basis for our algorithm).

Nevertheless, even with these limitations we believe our work
presents a significant improvement in the applicability of asynchronous 
linear solvers, as well as in the convergence analysis, and suggests
randomization as a key paradigm to serve as a foundation for asynchronous
methods.

While the primary aim of this paper is to present analytical results,
we also include some experimental results. With our implementation,
we are able to demonstrate that the proposed method can be attractive
for certain types of linear systems even in the absence of massive
parallelism. Previous asynchronous methods, as well as ours, are
based on basic iteration (e.g., Gauss-Seidel). Those are known to convergence very slowly
in the long run when compared with Krylov subspace methods. However,
big data applications typically require very low accuracy, so they
are better served using basic iterations as these tend to initially
converge very quickly and scale better. Our experiments show that
for a linear system arising from analysis of social media data, our
proposed algorithm scales well, pays very little to no penalty for
asynchronicity, and overall seems to present the best choice for solving
the said linear system to the required accuracy.

We review related work in Section~\ref{sec:related}. Essential background
on Randomized Gauss-Seidel is given in Section~\ref{sec:rgs}. In
Section~\ref{sec:ags} we propose two asynchronous models for executing
Randomized Gauss-Seidel: one assumes that consistent reads have been
enforced, another does not. Section \ref{sec:with} analyzes the convergence
when the consistent read assumption is enforced. Section \ref{sec:step}
shows that convergence can be improved if we control the step-size.
In Section \ref{sec:without}, we analyze convergence rate when we
allow inconsistent reads. We briefly discuss unsymmetric systems and
overdetermined least-squares in Section~\ref{sec:LS}. Section~\ref{sec:experiments}
presents experimental results. Finally, in Section~\ref{sec:conclusions}
we make some concluding remarks and discuss future work.

\subsection*{Setup and Notation}

Most of this paper is concerned with solving the linear equation $\matA\x=\b$
where $\matA\in\R^{n\times n}$ is a symmetric positive definite matrix,
and $\b\in\mathbb{R}^{n}$. For simplicity we assume that $\matA$
has a unit diagonal. This is easily accomplished using re-scaling.
Our results can be easily generalized to allow an arbitrary diagonal,
but making this assumption helps keep the presentation and notation
more manageable. We denote the exact solution to this equation by
$\x^{\star}$, i.e. $\x^{\star}=\matA^{-1}\b$. We denote the largest
eigenvalue of $\matA$ by $\lambda_{\max}$, and the smallest eigenvalue
by $\lambda_{\min}$. The condition number of $\matA$, which is equal
to $\lambda_{\max}/\lambda_{\min}$, is denoted by $\kappa$.
%The trace (sum of diagonal values) of $\matA$ is denoted by $\Trace{\matA}$.

We are predominantly interested in the case where $\matA$ is sparse
and very large, and the number of non-zeros in each row is between
$C_{1}$ and $C_{2}\ll n$ with a small ratio between $C_{2}$ and
$C_{1}$.\emph{ }This scenario frequently occurs in many scientific
computing applications. Throughout the paper we refer to this scenario
as the \emph{reference scenario.} We state and prove more general
results; we do not use the properties of the reference scenario in
the proofs. The reference scenario is mainly useful for the interpretation
of the practical implications of the results. Note that in the reference
scenario we have $\lambda_{\max}\leq C_{2}\ll n$, as $\matA$ has
a unit-diagonal (so off-diagonal entries must be smaller than or equal
to one).

We use $\dotprod{\cdot}{\cdot}{\matA}$ to denote the $\matA$ inner
product. That is, $\dotprod{\x}{\y}{\matA}\equiv\y^{\T}\matA\x$ where
$\x,\y\in\R^{n}$. The fact that $\matA$ is a symmetric positive
definite matrix guarantees that $\dotprod{\cdot}{\cdot}{\matA}$ is
an inner product. The $\matA$-norm is defined by $\ANorm{\x}\equiv\sqrt{\dotprod{\x}{\x}{\matA}}.$
We use $\e^{(1)},\e^{(2)},\dots,\e^{(n)}$ to denote the $n$-dimensional
identity vectors (i.e. $\e^{(i)}$ is one at position $i$ and zero
elsewhere). $\matA_{i}$ denotes row $i$ of $\matA$, and $\matA_{ij}$
denotes the $i,j$ entry of $\matA$. We will generally use subscript
indices on vectors for iteration counters. The notation $(\x)_{i}$
denotes the $i$th entry of $\x$.

Throughout the paper we describe algorithms that generate a series
of approximations to $\x^{\star}$, denoted by $\x_{0},\x_{1},\dots$
(subscript index is the iteration counter), which are actually random
vectors. We denote the expected squared $\matA$-norm of the error
of $\x_{m}$ by $E_{m}$. That is,
\[
E_{m}\equiv\Expect{\ANormS{\x_{m}-\x^{\star}}}\,.
\]

\section{\label{sec:related}Related Work}

Asynchronous methods were first suggested by Chazan and Miranker~\cite{CM69}
in their pioneering paper on chaotic relaxation. The theory and application
of asynchronous iterations has since been studied and used by many
authors. Noteworthy is the seminal text by Bertsekas and Tsitsiklis~\cite{BT89}.
A more recent review is by Frommer and Szyld~\cite{FS00}.

Historically, work on asynchronous methods focused on proving that
the methods converge in the limit, and not on convergence rate analysis.
In particular, the relation to the convergence rate of synchronous
counterparts, and the scaling of these methods, were seldom studied.
We are aware of only two exceptions of work published before 2011,
but the results are unsatisfactory. Baudet~\cite{Baudet78} generalizes 
the Chazan-Miranker result to nonlinear mappings using the same model of asynchronism. 
For the linear case, he does not extend the class of systems that can be 
solved in this model (the Chazan-Miranker result is an 
if and only if result, so there is no possibility for improvement 
in that model in that respect), but he does show that given a 
trace of an asynchronous iteration,
if in that trace the maximum delay on purging old information and
the rate of updating all components is bounded, then convergence is
linear with the rate of convergence divided by the sum of these two
bounds. The author does not analyze and show rigorous convergence
rates for a concrete algorithm for solving a linear system, nor compare the convergence rate of
some specific asynchronous algorithm to its synchronous counterpart.
Bertsekas and Tsitsiklis~\cite[Section 7.2, Exercise 1.2]{BT89}
prove a linear convergence rate of certain asynchronous iterations
for some classes of matrices (like weakly diagonally dominant matrices),
but analyze how the rate of convergence depends on the measure of
asynchronism only under very restrictive conditions and in a hard
to interpret manner~\cite[Section 6.3.5]{BT89}.

Following the influential work of Niu et al.~\cite{RRWN11}, recent
work has focused on analyzing the rate of convergence. When discussing
these recent results it is important to distinguish between methods
that assume a \emph{consistent read }model, and models that allow
\emph{inconsistent reads}. Informally (we give a formal definition
in Section~\ref{sec:ags}), the consistent read model assumes that
the part of the state (i.e., iteration vector) that the algorithm reads
in order to update the state is consistent with a state that existed
in shared memory at some point in time. Without special provisions,
which might be computationally expensive, the consistent read assumption
is somewhat unsatisfactory, although the analysis of a consistent read
model is not without merit, as we explain in Section~\ref{sec:ags}.

The basic model proposed by Niu et al.~\cite{RRWN11} is as follows.
There is an iteration vector $\x$ that is stored in shared-memory.
All processors share this memory and update it in an asynchronous,
uncoordinated fashion, without any form of locking. This implies that
the version of $\x$ that is used by a processor to update $\x$ is
not the same as the version on which this update is applied, as $\x$
has possibly been updated by other processors in the interim. It is
however assumed that write operations are atomic, and that there is
a bound $\tau$ on how many updates are missed. This model matches
modern multicore architectures well. Under an additional assumption
of consistent reads, Niu et al. establish a sub-linear convergence
rate of asynchronous stochastic gradient descent. Our model follows
the one proposed by Niu et al.~\cite{RRWN11}, although we analyze
convergence both in the consistent read and inconsistent read model.
We also establish a linear convergence rate, unlike the sublinear
rate for stochastic gradient descent, with a better dependence on
$\tau$.

Liu et al.~\cite{LWS14} propose an Asynchronous Randomized Kaczmarz
algorithm for solving consistent square and overdetermined linear
systems. They use the same model as Niu et al.\ do, and assume consistent
reads as well. A linear convergence rate is established. An extension
to inconsistent systems is discussed as well.

Later, Liu et al.~\cite{LWRBS14} develop an asynchronous stochastic
coordinate descent algorithm. Again, they use the same model as Niu et
al., and continue to assume consistent reads. Furthermore, they assume
that the radius of the iterate set is bounded, which is a condition
that might be hard to enforce in an asynchronous linear solver. They prove a
sublinear ($1/m$) convergence on general convex functions and a linear
convergence rate on functions that satisfy an ``essential strong
convexity'' property.

More recently, Liu et al.~\cite{LW14} suggest an asynchronous stochastic
proximal coordinate-descent algorithm for composite objective functions.
They allow inconsistent reads, and prove linear convergence for optimally
strongly convex functions, and a sublinear rate for general smooth
convex functions.

Our algorithm is closely related to the stochastic coordinate descent
algorithm in the sense that in essence our algorithm is an asynchronous
stochastic coordinate descent method applied to the strongly convex
quadratic optimization problem $\min_{\x}\frac{1}{2}\x^{\T}\matA\x-\b^{\T}\x$.
However, our results are much more tuned and interpretable to the
problem we consider ($\mat A\x=\b$) than the convergence result
for general strongly convex functions.

Hook and Dingle~\cite{HD13} analyze the convergence of the Jacobi
iteration when it is executed asynchronously on a distributed memory
machine. They prove upper and lower bounds on the convergence rate
of the iteration that are formulated in terms of the spectral radius
of $\mat A$ and two parameters of the asynchronous execution dynamics. Their
results indicate when convergence takes place and how fast it is,
even without the help of randomization. The dependence of the bounds
on parameters of the execution dynamics makes the convergence guarantee
hard to interpret. Nevertheless, the results show that performance
can suffer if an entry of the iterate is repeatedly updated using
stale data because of a slow communication link or fails to be updated
at all because of a slow processor. This indicates the potential of
using randomization to obtain robust performance in the face of such
single-point-of-failure vulnerabilities.

Unrelated to the previous, we also note Freris and Zouzias's~\cite{FZ12} work on using an asynchronous
variant of randomized Kaczmarz~\cite{SV09} to synchronize clocks
in a wireless network. They analyze the convergence rate in a semi-asynchronous
model that is suitable for wireless networks, but not for shared-memory
numerical computations.

\section{\label{sec:rgs}Randomized Gauss-Seidel}

Our asynchronous algorithm is based on the randomized variant of the
Gauss-Seidel iteration, originally proposed by Leventhal and Lewis~\cite{LL10}.
We actually use a slight modification due to Griebel and Oswald~\cite{GO12}
that introduces a step-size (akin to under- and over-relaxation).  The goal of this section
is to describe and review the basic properties of the randomized Gauss-Seidel
iteration.

Consider the following iteration applied to some arbitrary initial
vector $\x_{0}\in\R^{n}$, and a series of direction vectors $\d_{0},\d_{1},\dots$:
\[
\begin{alignedat}{1}\rb_{j} & =\,\b-\matA\x_{j}\\
\gamma_{j} & =\,\d_{j}^{\T}\rb_{j}\\
\x_{j+1} & =\,\x_{j}+\beta\gamma_{j}\d_{j}\,,
\end{alignedat}
\]
where $0<\beta<2$. In terms of the analysis it is more convenient to write the iteration
in the following equivalent form:
\begin{equation}
\begin{alignedat}{1}\gamma_{j} & =\,\dotprod{\x^{\star}-\x_{j}}{\d_{j}}{\matA}\\
\x_{j+1} & =\,\x_{j}+\beta\gamma_{j}\d_{j}.
\end{alignedat}
\label{eq:basic-iteration}
\end{equation}
Both iterations are listed to show that even though the unknown $\x^{\star}$
appears in~\eqref{eq:basic-iteration}, the iteration is computable.

In~\eqref{eq:basic-iteration} the scalars $\gamma_{0},\gamma_{1},\dots$
are selected so as to minimize $\ANorm{\x^{\star}-\x_{j+1}}$ when
$\x_{j+1}$ is obtained from $\x_{j}$ by taking a step in the direction
$\d_{j}$ with $\beta=1$. There are quite a few ways to set $\d_{0},\d_{1},\dots$.
Each is associated with a different per-iteration cost, and different
convergence properties. One well known method is setting $\d_{i}=\e^{((i\,\mod\, n)+1)}$.
In that case, every $n$ iterations corresponds to a single iteration
of Gauss-Seidel (recall that we assume that the matrix has unit diagonal).

Leventhal and Lewis suggested using random directions instead of deterministic
ones: $\d_{0},\d_{1},\dots$ are i.i.d. random vectors, taking $\e^{(1)},\dots,\e^{(n)}$
with equal probability%
\footnote{Leventhal and Lewis consider the more general setting where $\matA$
does not have unit diagonal. For that case, they analyze non-uniform
probabilities. When the matrix has unit diagonal, their algorithm
and the convergence analysis reduces to the ones stated here.%
}. For this distribution of direction vectors, Griebel and Oswald~\cite{GO12}
prove the following
bound on the expected error in the $\matA$-norm (the case of $\beta=1$ was
analyzed by Leventhal and Lewis~\cite{LL10}):
\begin{equation}
E_{m}\leq\left(1-\frac{\beta(2-\beta)\lambda_{\min}}{n}\right)^{m}\ANormS{\x_{0}-\x^{\star}}\,.\label{eq:rgs-bound}
\end{equation}
So, the randomized Gauss-Seidel iteration converges in expectation
at a linear rate%
\footnote{Some care should be employed with terminology. Some mathematicians
or computer scientists might say this is an \emph{exponential }or
\emph{geometric convergence rate. }However, numerical analysts refer
to this rate as \emph{linear, }as it is linear in $O(\log(\epsilon)$)
where $\epsilon$ is the desired reduction factor of the error. %
}. Markov's inequality now implies that given $\epsilon>0$ and $\delta\in(0,1)$,
for
\[
m\geq\frac{n}{\beta(2-\beta)\lambda_{\min}}\ln\left(\frac{1}{\delta\epsilon^{2}}\right)
\]
we have
\[
\Pr(\ANorm{\x_{m}-\x^{\star}}\geq\epsilon\ANorm{\x_{0}-\x^{\star}})\leq\delta\,.
\]
If we could compute $\ANorm{\x_{m}-\x^{\star}}$, this will imply
a randomized algorithm whose probabilistic guarantees are only on
the running time, and not on the quality of approximation. In practice,
we can check the residual $\TNorm{\b-\mat A\x_{m}}$, as is typically
done in iterative methods. Similar transformations can be done to
other bounds throughout this paper. These transformations are rather
technical, so we omit them. Note that the expected cost per iteration
of randomized Gauss-Seidel is $\Theta(\nnz{\matA}/n)$, so $n$ iterations
(which we refer to as a \emph{sweep) }are about as costly as a single
Gauss-Seidel iteration.

The proof of \eqref{eq:rgs-bound} relies on the following lemma,
which we use extensively in our analysis as well. The upper bound
in the lemma was not proven by Leventhal and Lewis~\cite{LL10},
but it can be proved using the same technique they used to prove the
lower bound. For completeness we include a proof.
\begin{lem}
\label{lem:key}Let $\d$ be a random vector taking $\e^{(1)},\dots,\e^{(n)}$
with equal probability. Suppose that $\x$ and $\d$ are independent.
Then,
\[
\frac{\lambda_{\min}}{n}\Expect{\ANormS{\x-\x^{\star}}}\leq\Expect{\dotprodsqr{\x-\x^{\star}}{\d}{\matA}}\leq\frac{\lambda_{\max}}{n}\Expect{\ANormS{\x-\x^{\star}}}\,.
\]
\end{lem}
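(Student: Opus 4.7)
The plan is to condition on $\x$, use the discrete uniform distribution of $\d$ to compute the inner expectation exactly, and then bound the resulting quadratic form by Rayleigh quotient arguments.

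First I would observe that $\dotprod{\x-\x^\star}{\d}{\matA} = \d^\T\matA(\x-\x^\star)$, so when $\d = \e^{(i)}$ this equals the $i$-th entry of the vector $\matA(\x-\x^\star)$. Using independence of $\x$ and $\d$ and the uniform distribution of $\d$ on the identity vectors, the conditional expectation unfolds into
\[
\ExpectC{\d}{\dotprodsqr{\x-\x^\star}{\d}{\matA} \,\big|\, \x} = \frac{1}{n}\sum_{i=1}^{n}\bigl((\matA(\x-\x^\star))_i\bigr)^2 = \frac{1}{n}\TNormS{\matA(\x-\x^\star)}.
\]
Writing $\y = \x-\x^\star$, this is $\frac{1}{n}\y^\T\matA^2\y$.

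Next I would relate $\y^\T\matA^2\y$ to $\y^\T\matA\y = \ANormS{\y}$. The trick is to substitute $\z = \matA^{1/2}\y$, which is well defined since $\matA$ is symmetric positive definite. Then $\y^\T\matA^2\y = \TNormS{\matA\y} = \z^\T\matA\z$ while $\y^\T\matA\y = \TNormS{\z}$, so the ratio $\y^\T\matA^2\y / \y^\T\matA\y$ equals a Rayleigh quotient of $\matA$ and is therefore sandwiched between $\lambda_{\min}$ and $\lambda_{\max}$. This yields
\[
\frac{\lambda_{\min}}{n}\ANormS{\y} \;\leq\; \ExpectC{\d}{\dotprodsqr{\y}{\d}{\matA} \,\big|\, \x} \;\leq\; \frac{\lambda_{\max}}{n}\ANormS{\y}.
\]

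Finally I would take the outer expectation over $\x$ on all three sides, invoke the tower property on the middle term, and pull the (nonrandom) constants $\lambda_{\min}/n$ and $\lambda_{\max}/n$ out of the expectations on the flanking sides. This delivers the claimed inequality. There is no real obstacle here; the only step requiring a moment of care is the Rayleigh quotient argument, since one must not confuse $\y^\T\matA\y$ with $\y^\T\y$ — the substitution $\z = \matA^{1/2}\y$ is precisely what reconciles the two.
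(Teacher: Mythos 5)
Your proposal is correct and follows essentially the same route as the paper's proof: condition on $\x$, compute the inner expectation as $\frac{1}{n}\TNormS{\matA(\x-\x^{\star})}$, and bound the ratio $\y^{\T}\matA^{2}\y/\y^{\T}\matA\y$ by Courant--Fischer via the substitution $\z=\matA^{1/2}\y$ (the paper writes this as $\y=\mat B(\x-\x^{\star})$ with $\matA=\mat B^{2}$). If anything, your formulation is marginally cleaner, since stating the bound as $\lambda_{\min}\TNormS{\z}\leq\z^{\T}\matA\z\leq\lambda_{\max}\TNormS{\z}$ avoids the paper's division by a quadratic form that could vanish when $\x=\x^{\star}$.
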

\begin{proof}
Let $\mat B$ be the unique symmetric positive matrix such that $\matA=\mat B^{2}$.
We have
\begin{eqnarray*}
\Expect{\dotprodsqr{\x-\x^{\star}}{\d}{\matA}} & = & \Expect{\Expect{\dotprodsqr{\x-\x^{\star}}{\d}{\matA}}\middle|\,\x}\\
 & = & \Expect{\frac{1}{n}\sum_{i=1}^{n}\dotprodsqr{\x-\x^{\star}}{\e_{i}}{\matA}}\\
 & = & \frac{1}{n}\Expect{\TNormS{\matA(\x-\x^{\star})}}\\
 & = & \frac{1}{n}\Expect{(\x-\x^{\star})^{\T}\matA^{2}(\x-\x^{\star})}\\
 & = & \frac{1}{n}\Expect{\frac{(\x-\x^{\star})^{\T}\mat B\matA\mat B(\x-\x^{\star})}{(\x-\x^{\star})^{\T}\mat B\mat B(\x-\x^{\star})}\cdot(\x-\x^{\star})^{\T}\mat B\mat B(\x-\x^{\star})}\\
 & = & \frac{1}{n}\Expect{\frac{(\x-\x^{\star})^{\T}\mat B\matA\mat B(\x-\x^{\star})}{(\x-\x^{\star})^{\T}\mat B\mat B(\x-\x^{\star})}\cdot\ANormS{\x-\x^{\star}}}\,.
\end{eqnarray*}
According to the Courant-Fischer theorem, for every vector $\y\neq0$
we have
\[
\lambda_{\min}\leq\frac{\y^{\T}\matA\y}{\y^{\T}\y}\leq\lambda_{\max}\,.
\]
Applying the last inequality to the previous equality with $\y=\mat B(\x-\x^{\star})$
completes the proof.
\end{proof}

\subsubsection*{Non-Unit Diagonal}

We now explain why there is no loss in generality in assuming that
$\matA$ has unit diagonal.

Suppose that $\mat B$ does not have unit diagonal. Consider the following
more general Randomized Gauss-Seidel iteration (also due to Leventhal
and Lewis~\cite{LL10}):
\begin{equation}
\begin{alignedat}{1}\tilde{\gamma}_{j} & =\,\frac{\dotprod{\y^{\star}-\y_{j}}{\d_{j}}{\mat B}}{\dotprod{\d_{j}}{\d_{j}}{\mat B}}\\
\y_{j+1} & =\,\y_{j}+\beta\tilde{\gamma}_{j}\d_{j}.
\end{alignedat}
\label{eq:basic-iteration-nonunit}
\end{equation}
where $\y^{\star}$ is the solution to $\mat B\y=\z$, and $\d_{0},\d_{1},\dots$
are i.i.d. random vectors taking $\e^{(1)},\dots,\e^{(n)}$ with equal
probability. Let $\mat D$ be the diagonal matrix such $\mat A=\mat D\mat B\mat D$
has unit diagonal, and consider the unit-diagonal Randomized Gauss-Seidel
iteration~\eqref{eq:basic-iteration} for the linear system $\matA\x=\mat D\z$
using the same direction vectors $\d_{0},\d_{1},\dots$. It is not
hard to verify that $\y_{j}=\mat D\x_{j}$ and that $\ANorm{\x_{j}-\x^{\star}}=\BNorm{\y_{j}-\y^{\star}}$.
Therefore, analyzing the unit-diagonal scenario is sufficient.

\section{\label{sec:ags}Asynchronous Randomized Gauss-Seidel (AsyRGS)}

\begin{algorithm}
\begin{algorithmic}[1]

\STATE \textbf{Input:} $\matA\in\R^{n\times n}$, $\b\in\R^{n}$, (pointer to) 
vector $\x$ (initial approximation and algorithm output),
$\beta\in(0,2)$.

\STATE

\LOOP

\STATE Pick a random $r$ uniformly over $\{1,\dots,n\}$

\STATE Read the entries of $\x$ corresponding to non-zero entries
in $\matA_{r}$

\STATE Using these entries, compute $\gamma\gets(\b)_{r}-\matA_{r}\x$

\STATE Update: $(\x)_{r}\gets(\x)_{r}+\beta \gamma$

\ENDLOOP

\end{algorithmic}

\protect\caption{\label{alg:rgs}Randomized Gauss-Seidel}
\end{algorithm}

Algorithm~\ref{alg:rgs} contains a pseudo-code description of randomized
Gauss-Seidel in which we made the read and update operations explicit.
This obviously entails some details that are, in a sense, implementation
specific. There are implementations of the randomized Gauss-Seidel
iteration which do not match the description in Algorithm~\ref{alg:rgs}.

Consider a shared memory model with $P$ processors. Each processor
follows Algorithm~\ref{alg:rgs} using the same $\x$, i.e. all processors
read and update the same $\x$ stored in a shared memory. The processors
do not explicitly coordinate or synchronize their iterations. We do,
however, impose assumptions, some of which may require enforcement
 in an actual implementation. The first assumption is rather simple:
the update operation in each iteration is atomic.

\begin{assumption}{A-1} (Atomic Write). The update operation in line
7 is atomic.\end{assumption}

The update operation operates on a single coordinate in $\x$. For
single- or double-precision floating point numbers, updates
of the form used in line 7 have hardware support on many modern processors
(e.g. compare-and-exchange on recent Intel processors).

If atomic write is enforced, then for the sake of the analysis we
can impose an order $\x_{0},\x_{1},\x_{2},\dots$ on the values that
$\x$ takes during the computation. Here $\x_{j}$ denotes the value
of $\x$ after $j$ updates have been applied (breaking ties in an
arbitrary manner).

We now turn our attention to the read operation in line 5. Here we
consider two possible models. In the first model, we assume the following
consistent read assumption is enforced.

\begin{assumption}{A-2} (Consistent Read). The values of the entries
of $\x$ read in line 5 appeared together in $\x$ at some time before
the update operation (line 7) is executed. \end{assumption}

Note that Assumption~A-2 does not necessarily imply that none of
the entries read during the execution of line 5 are modified while
that line is being executed; this is only one way of enforcing this
assumption. More formally, if $R$ denotes the set of entries read
in the execution of line 5 for a particular execution of the iteration,
and $M$ denotes the set of entries modified during the execution
of line 5 in that iteration, then Assumption~A-2 holds for that iteration
if $R\cap M=\emptyset$. However, this is only a sufficient condition,
not a necessary one.

With consistent read, we can denote by $k(j)\leq j$ the maximum iteration
index such $\x_{k(j)}$ is equal to the values read on line 5, on
the indices read during the execution of line 5. The existence of
such a $k(j)$ is guaranteed by Assumption~A-2 (since all writes
are atomic, all time intervals correspond to some iteration index).
The iteration can then be written:
\begin{equation}
\begin{alignedat}{1}\gamma_{j} & =\,\dotprod{\x^{\star}-\x_{k(j)}}{\d_{j}}{\matA}\\
\x_{j+1} & =\,\x_{j}+\beta\gamma_{j}\d_{j}\,.
\end{alignedat}
\label{eq:a-iteration}
\end{equation}

We also consider a model where we allow inconsistent reads.
Since every iteration changes a single coordinate, and we require all writes
to be atomic, the value of $\x$ read in line 5 is the result of a
subset of the updates that occurred before the write operation in
line 7 is executed. Let us denote by $K(j)\subseteq\{0,1,\dots,j-1\}$
a maximal set of updates consistent with the computation of $\gamma$
in iteration $j$. In other words, an index $i\leq j$ is in $K(j)$
if either it updates an entry of $\x$ not read for computing $\gamma_{j}$,
or it updates an entry and the update was applied before that entry
was read. The entries read are consistent with the vector
\[
\x_{K(j)}=\x_{0}+\sum_{i\in K(j)}\beta\gamma_{i}\d_{i}\,.
\]
Note that the $\x_{K(j)}$ might have never existed in memory during
the execution of the algorithm. Nevertheless, the iteration can be
written as
\begin{equation}
\begin{alignedat}{1}\gamma_{j} & =\,\dotprod{\x^{\star}-\x_{K(j)}}{\d_{j}}{\matA}\\
\x_{j+1} & =\,\x_{j}+\beta\gamma_{j}\d_{j}\,.
\end{alignedat}
\label{eq:uc-iteration}
\end{equation}

Obviously, enforcing consistent reads involves some overhead.
In the analysis we consider both models, as the  bounds for the
inconsistent read model are not as good as the
ones obtained when assuming consistent reads. There is clearly a trade-off
here, which we present but do not attempt to quantify. It is a complex
trade-off that depends on many factors, including possible hardware
features like transactional memory that may enable efficient enforcement
of consistent reads.

More importantly, in many cases even\emph{ without }any special provisions,
the probability of an inconsistent read in an iteration is extremely
small, so much that we do not expect it to happen much (or at all)
in a normal execution of the algorithm. For the definition of consistent
read to be violated in a certain iteration there must be two distinct
indices $l$ and $c$ for which all of the following conditions are
met:
\begin{enumerate}
\item $\matA_{rc}\neq0$ and $\matA_{rl}\neq0$ ($r$ is the index picked
in line 4).
\item Both $(\x)_{c}$ and $(\x)_{l}$ are modified by other processors
during the execution of line 5 (the read operation).
\item $(\x)_{c}$ is read before $(\x)_{l}$.
\item $(\x)_{c}$ is modified (by another processor) \emph{after} it is
read, and $(\x)_{l}$ is modified (by another processor) \emph{before
}it is read.
\end{enumerate}
Having all these condition occur at the same time is rather rare.
In fact, just having the first two occur is rather rare in the reference
scenario. The reason is that each iteration reads at most $C_{2}\ll n$
entries. Suppose there are $u$ updates while reading those entries.
Each such update affects a single random entry. Therefore, the probability
that it will update one of the $C_{2}$ entries being read is at most
$C_{2}/n$. The probability of getting two such updates is bounded
by the probability of getting at least two in a binomial distribution
with $u$ experiments and probability $C_{2}/n$. Unless $u$ is very
large, this is an extremely small probability (since $C_{2}/n$ is
tiny).

The discussion above suggests that in many cases the bound we obtain
for the consistent read will be rather descriptive even if no special
provisions are taken to enforce the consistent read assumption. That
is, we expect the actual behavior to be somewhere between the bound
for the consistent read and that for the inconsistent
read, but closer to the one for consistent read.

We are mainly interested in algorithms with provable convergence rate.
In a totally asynchronous model with arbitrary delays, there can also
be an arbitrary delay in convergence. Therefore, we assume that asynchronism
is bounded in the sense that delays are bounded.

\begin{assumption}{A-3} (Bounded Asynchronism). There is a constant
$\tau$ (measure of asynchronism) such that all updates that are older
than $\tau$ iterations participate in the computation of iteration
$j$, for all iterations $j=1,2,\dots$. \end{assumption}

In the consistent read model, this assumption translates to requiring
that

\begin{equation}
j-\tau\leq k(j)\leq j\,.\label{eq:delay}
\end{equation}
In the inconsistent read model, this assumption translates to requiring
that
\begin{equation}
\{0,1,\dots,\max\{0,j-\tau-1\}\}\subseteq K(j)\,.\label{eq:delay-K}
\end{equation}
Since the running time of an iteration is proportional to the number
of non-zeros in the row, a reasonable upper bound on $\tau$ is $c\cdot C_{2}\cdot P/C_{1}$
for some small constant $c$. However, this is probably a pessimistic
upper bound, and in general  when the variance in the number of non-zeros
per row is not too large relative to the mean, we expect $\tau$ to
be of order of $P$. Regardless, it is clear that in the reference
scenario $\tau=O(P)$ (recall that we assume that $C_{2}/C_{1}$ is
a small constant).

We now discuss the relation between $k(0),k(1),\dots$ or $K(0),K(1),\dots$
and the random variables $\d_{0},\d_{1},\dots$. If we inspect the
pseudo-code of Algorithm~\ref{alg:rgs} closely we will realize that
$k(j)$ or $K(j)$ (depending on the model) depend on the random choices
$\d_{0},\d_{1},\dots,\d_{j-1}$ made before the write operation, and
more crucially on the random choice $\d_{j}$. The reason is that
on line 5 we read only the relevant entries of $\x$, so only a small
set of updates can be considered for inclusion. The set of relevant
entries is determined by the selection of $\d_{j}$. However, a completely
adversarial model which allows dependence of $k(j)$ (or $K(j)$)
on $\d_{0},\d_{1},\dots,\d_{j}$ (for $j=1,2,\dots)$ and analyzes
the worst-case behavior is not likely to be very faithful to the actual
behavior of the algorithm. Therefore, we assume the delays are independent
of the random choices, but allow them to be arbitrary (as long as
the bounded asynchronism assumption holds). We acknowledge that this assumption
cannot be enforced without paying a significant penalty in terms of iteration costs
(e.g., the assumption is satisfied if the algorithm reads all the entries of $\x$ in each step)
 
\begin{assumption}{A-4} (Independent Delays). We allow an arbitrary
set of delays that satisfy~\eqref{eq:delay} or~\eqref{eq:delay-K}
(depending on the context), but they do not depend on the random choices
$\d_{0},\d_{1},\dots$. \end{assumption}

Finally, we remark on the role of the various assumptions in the analysis.
Assumptions A-1 and A-2 allow us to write well defined iterations
(iterations~\eqref{eq:a-iteration} and \eqref{eq:uc-iteration})
that can be analyzed mathematically. Assumption A-3 allows us to bound
the number of elements in $\x_{j}-\x_{k(j)}$ (or $\x_{j}-\x_{K(j)}$
) that are non-zero by $\tau$, therefore implying that $\dotprod{\x^{\star}-\x_{k(j)}}{\d_{j}}{\matA}=\dotprod{\x^{\star}-\x_{j}}{\d_{j}}{\matA}$
(respectively, $\dotprod{\x^{\star}-\x_{K(j)}}{\d_{j}}{\matA}=\dotprod{\x^{\star}-\x_{j}}{\d_{j}}{\matA}$)
with high probability (as long as $\tau$ is small relative to $n$),
which implies that the computed step-size is correct in most iterations.
Assumption A-4 allows us to treat $k(j)$ and $K(j)$, in the proofs,
as deterministic even though they might be random. Since $k(0),k(1),\dots$
(or $K(0),K(1),\dots$) do not depend on $\d_{0},\d_{1},\dots$ we can
condition on choices for which equation~\eqref{eq:delay} (respectively,
equation~\eqref{eq:delay-K}) holds, and any bound that does not
depend on $k(0),k(1),\dots$ (respectively, $K(0),K(1),\dots$) will
hold for random ones as well.

For clarity, we now detail explicitly the two models we analyze in this paper.

\paragraph{Consistent Read Model.}
Algorithm~\ref{alg:rgs} executed on all processors using the same $\x$ (i.e., all processors
read and update the same $\x$ stored in a shared memory) with all four assumptions (A-1 to A-4). The
governing iteration is:
\begin{equation}
\begin{alignedat}{1}
j-\tau& \leq k(j) \leq j\\ 
\d_{j} & \sim U(\e^{(1)},\dots,\e^{(n)} )\\
\gamma_{j} & =\,\dotprod{\x^{\star}-\x_{k(j)}}{\d_{j}}{\matA}\\
\x_{j+1} & =\,\x_{j}+\beta\gamma_{j}\d_{j}~\,
\end{alignedat}
\label{eq:full-a-iteration}
\end{equation}
with the additional assumptions that $\d_0,\d_1,\dots$ are i.i.d, and 
that $k(0), k(1),\dots$ do not depend on the random choices $\d_{0},\d_{1},\dots$.
In the above, $U(\e^{(1)},\dots,\e^{(n)})$ denotes a uniform distribution on 
the $n$-dimensional identity vectors.

\paragraph{Inconsistent Read Model.}
Algorithm~\ref{alg:rgs} executed on all processors using the same $\x$ (i.e., all processors
read and update the same $\x$ stored in a shared memory) with assumptions A-1, A-3 and A-4. The
governing iteration is:
\begin{equation}
\begin{alignedat}{1}
\{0, & \dots,\max\{0,j-\tau-1\}\} \subseteq K(j) \subseteq \{0,\dots,j\} \\
\d_{j} & \sim U(\e^{(1)},\dots,\e^{(n)} )\\
\gamma_{j} & =\,\dotprod{\x^{\star}-\x_{K(j)}}{\d_{j}}{\matA}\\
\x_{j+1} & =\,\x_{j}+\beta\gamma_{j}\d_{j}~\,
\end{alignedat}
\label{eq:full-inc-iteration}
\end{equation}
with the additional assumptions that $\d_0,\d_1,\dots$ are i.i.d, and 
that $K(0), K(1),\dots$ do not depend on the random choices $\d_{0},\d_{1},\dots$.

\section{\label{sec:with}Convergence Bound with Consistent Read and Unit Step-size ($\beta=1$)}

In this section, we analyze the iteration under the consistent read model, i.e. 
iteration~\eqref{eq:full-a-iteration}. For the moment, we consider only unit step-size ($\beta=1$).
\begin{thm}
\label{thm:main}Consider iteration 
\eqref{eq:full-a-iteration} with $\beta=1$ for an
arbitrary starting vector $\x_{0}$, that is iteration~\eqref{eq:a-iteration} 
where $\d_{0},\d_{1},\dots$ are
i.i.d. vectors that take $\e^{(1)},\dots,\e^{(n)}$ with equal probability,
and $k(0),k(1),\dots$ are such that~\eqref{eq:delay} holds but
are independent of the random choices of $\d_{0},\d_{1},\dots$. Let
$\rho= \frac{1}{n}\InfNorm{\matA}=\max_{l}\left\{ \frac{1}{n}\sum_{r=1}^{n}\left|\matA_{lr}\right|\right\} $.
Provided that $2\rho\tau<1$, the following holds:
\begin{enumerate}
\item [(a)]For every $m\geq\frac{\log(1/2)}{\log(1-\lambda_{\max}/n)}\approx\frac{0.693n}{\lambda_{\max}}$
we have
\[
E_{m}\leq\left(1-\frac{\nu_{\tau}}{2\kappa}\right)E_{0}\,,
\]
where
\[
\nu_{\tau}=1-2\rho\tau
\]

\item [(b)]Let $T_{0}=\left\lceil \frac{\log(1/2)}{\log(1-\lambda_{\max}/n)}\right\rceil $
and $T=T_{0}+\tau$. For every $m\geq rT$ ($r=1,2,\dots$ ) we have
{\small{}
\[
E_{m}\leq\left(1-\frac{\nu_{\tau}}{2\kappa}\right)\left(1-\frac{\nu_{\tau}(1-\lambda_{\max}/n)^{\tau}}{2\kappa}+\chi\right)^{r-1}E_{0}
\]
}where
\[
\chi=\frac{\rho\tau^{2}\lambda_{\max}(1-\lambda_{\max}/n){}^{-2\tau}}{n}\,.
\]

\end{enumerate}
\end{thm}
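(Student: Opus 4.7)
The plan is to derive a delay-perturbed one-step recursion for the expected squared $\matA$-norm of the error, telescope it over a burn-in window of length $T_0$ for part~(a), and then chain windows of length $T=T_0+\tau$ for part~(b).

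First I would expand $\ANormS{\x_{j+1}-\x^\star}$ using $\x_{j+1}=\x_j+\gamma_j\d_j$ and the fact that $\dotprod{\d_j}{\d_j}{\matA}=1$ (because $\matA$ has unit diagonal). Setting $\alpha_j=\dotprod{\x_j-\x^\star}{\d_j}{\matA}$ and $c_j=\dotprod{\x_j-\x_{k(j)}}{\d_j}{\matA}$, note that $\gamma_j=c_j-\alpha_j$, so that a short manipulation collapses the expansion to the clean identity
\[
\ANormS{\x_{j+1}-\x^\star}=\ANormS{\x_j-\x^\star}-\alpha_j^2+c_j^2,
\]
which separates synchronous progress from asynchrony noise. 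Conditioning on the past (using Assumption A-4 so that $k(j)$ is independent of $\d_j$) and invoking Lemma~\ref{lem:key} gives $\Expect{\alpha_j^2\mid\mathrm{past}}\geq\tfrac{\lambda_{\min}}{n}\ANormS{\x_j-\x^\star}$, while the noise has conditional expectation $\tfrac{1}{n}\TNormS{\matA(\x_j-\x_{k(j)})}$.

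Second, I would bound the noise by exploiting that $\x_j-\x_{k(j)}=\sum_{i=k(j)}^{j-1}\gamma_i\d_i$ has at most $\tau$ nonzero coordinates. Expanding $\TNormS{\matA(\x_j-\x_{k(j)})}$ as a double sum over recent update indices, combining $|\matA_{rs}|\leq 1$ (from unit diagonal plus positive definiteness) with the row-average bound from $\rho=\tfrac{1}{n}\InfNorm{\matA}$, and averaging over the uniformly drawn direction indices, I would obtain an estimate roughly of the form
\[
\Expect{\TNormS{\matA(\x_j-\x_{k(j)})}}\;\lesssim\;\rho\tau\lambda_{\max}\sum_{i=k(j)}^{j-1}E_i.
\]
Plugging this back into the identity produces a recursion of the form $E_{j+1}\leq(1-\lambda_{\min}/n)E_j+\tfrac{\rho\tau\lambda_{\max}}{n}\sum_{i=k(j)}^{j-1}E_i$, and the factor $\nu_\tau=1-2\rho\tau$ emerges after absorbing the noise into the progress via the bounded-delay comparison $E_{k(j)}\leq E_{j-\tau}$. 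The hypothesis $2\rho\tau<1$ is precisely what keeps this effective coefficient positive.

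To deduce part~(a) I would iterate the recursion for $T_0=\lceil\log(1/2)/\log(1-\lambda_{\max}/n)\rceil$ steps. Using that $(1-\lambda_{\max}/n)^{T_0}\leq\tfrac12$ by the definition of $T_0$, the cumulative contraction over a single window works out to $1-\nu_\tau/(2\kappa)$, where the factor $\tfrac12$ comes from the geometric sum and $\kappa$ from $\lambda_{\max}/\lambda_{\min}$. Part~(b) follows by chaining this argument over successive windows of length $T=T_0+\tau$: the extra $\tau$ iterations act as a buffer so that the delay references $\x_{k(j)}$ of iterations in one window lie inside the previous, already-contracted window. Each chained window contracts by $1-\nu_\tau(1-\lambda_{\max}/n)^\tau/(2\kappa)$ (the $(1-\lambda_{\max}/n)^\tau$ factor bounds the worst-case growth of the reference $E_{k(j)}$ relative to $E_j$ across the window boundary), plus an additive $\chi=\rho\tau^2\lambda_{\max}(1-\lambda_{\max}/n)^{-2\tau}/n$ aggregating the quadratic cross-term residuals over the $T$ iterations.

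The main obstacle, I expect, is the precise noise bound: a naive application of Cauchy--Schwarz yields a rate like $1-O(\rho\tau^2)$, which is far too weak, so obtaining the advertised $\nu_\tau=1-2\rho\tau$ requires simultaneously exploiting the randomization (to invoke the row-average bound $\rho$ rather than the entrywise bound $1$) and the sparsity of the delay (only $\tau$ summands), while carefully tracking which quantities are conditional expectations and which are averaged over the full direction history. The associated book-keeping for the additive $\chi$ correction in part~(b), and verifying that it does not dominate the geometric factor in the regime $2\rho\tau<1$, will also be delicate.
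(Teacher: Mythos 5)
Your clean identity $\ANormS{\x_{j+1}-\x^\star}=\ANormS{\x_j-\x^\star}-\alpha_j^2+c_j^2$ is correct, and in fact the paper derives the very same identity --- but only to establish the lower bound $E_{j+1}\geq(1-\lambda_{\max}/n)E_j$. Using it as the \emph{main} decomposition is precisely what blocks you from reaching $\nu_\tau=1-2\rho\tau$. Since $c_j=\sum_{t=k(j)}^{j-1}\gamma_t\dotprod{\d_t}{\d_j}{\matA}$ is a sum of up to $\tau$ terms, squaring it produces order $\tau^2$ cross-products: the expectation bound $\Expect{\gamma_t^2\left|\dotprod{\d_t}{\d_j}{\matA}\right|}\leq\rho\Expect{\gamma_t^2}$ gives $\Expect{c_j^2}\leq\rho\tau\sum_t\Expect{\gamma_t^2}$ (your displayed noise estimate is indeed achievable in this form), but unrolling over $j$ --- each index $t$ falling in at most $\tau$ windows --- multiplies by another $\tau$, so this route structurally yields a $1-O(\rho\tau^2)$-type constant. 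This is exactly the dependence the paper is \emph{forced} into for the inconsistent-read model (Theorem~\ref{thm:also}, with $\rho_2\tau^2$), where the squared form is unavoidable because $\x_{K(j)}$ may depend on $\d_t$. Worse, in your grouping the progress term $\Expect{\alpha_j^2}\geq(\lambda_{\min}/n)E_j$ and the noise terms $(\lambda_{\max}/n)E_{k(t)}$ are \emph{different} quantities, so absorbing one into the other costs a factor of $\kappa$: your convergence condition would read roughly $\rho\tau^2\kappa\lesssim1$, not the $\kappa$-free $2\rho\tau<1$ of the theorem. Finally, the absorption step you propose via ``$E_{k(j)}\leq E_{j-\tau}$'' presumes $E_j$ is monotonically decreasing, which fails under asynchrony (individual steps can increase the error); the paper only controls this in the opposite direction, via $E_b\geq(1-\lambda_{\max}/n)^{b-a}E_a$, which your own identity does deliver by dropping $c_j^2\geq0$.

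The missing idea is the paper's alternative grouping of the same identity: with $g_j=\dotprod{\x_{k(j)}-\x^\star}{\d_j}{\matA}$ (the step size actually computed), write $\ANormS{\x_{j+1}-\x^\star}=\ANormS{\x_j-\x^\star}-g_j^2-2g_jc_j$, so the asynchrony term is \emph{linear}, not quadratic, in the recent updates. Expanding $-2g_jc_j$ as a single sum over $t$ and applying AM--GM per summand \emph{while retaining the weight} $\left|\dotprod{\d_t}{\d_j}{\matA}\right|$ gives, in expectation, at least $-\rho\tau\Expect{g_j^2}-\rho\sum_{t=k(j)}^{j-1}\Expect{\gamma_t^2}$ (the weight averages to at most $\rho$ over the two independent uniform coordinate choices, by Assumption A-4 and $k(t)\leq t$). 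Crucially, the noise pieces are then the \emph{same} objects $\Expect{\dotprodsqr{\x_{k(t)}-\x^\star}{\d_t}{\matA}}$ as the progress terms, so after unrolling they merge exactly into the coefficient $(1-\rho\tau)-\rho\tau=1-2\rho\tau$ \emph{before} Lemma~\ref{lem:key} is invoked; only then is the combined sum bounded below via $(\lambda_{\min}/n)E_{k(i)}\geq(\lambda_{\min}/n)(1-\lambda_{\max}/n)^iE_0$ and the geometric sum, giving (a), with (b) following from your (essentially correct) windowing picture plus the boundary bookkeeping for terms referencing the previous window. Without this regrouping, the central constant $1-2\rho\tau$ is out of reach.
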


\begin{proof}
In the proof, we use the following abbreviations:
\[
\delta_{\min}=\frac{\nu_{\tau}\lambda_{\min}}{n},\qquad\delta_{\max}=1-\frac{\lambda_{\max}}{n}\,.
\]

We begin with simple algebraic manipulations:

\begin{eqnarray}
\ANormS{\x_{j+1}-\x^{\star}} & = & \ANormS{\x_{j}+\gamma_{j}\d_{j}-\x^{\star}}\nonumber \\
 & = & \ANormS{\x_{j}-\x^{\star}}+\ANormS{\gamma_{j}\d_{j}}+2\dotprod{\x_{j}-\x^{\star}}{\gamma_{j}\d_{j}}{\matA}\nonumber \\
 & = & \ANormS{\x_{j}-\x^{\star}}+\gamma_{j}^{2}+2\gamma_{j}\dotprod{\x_{j}-\x^{\star}}{\d_{j}}{\matA}\nonumber \\
 & = & \ANormS{\x_{j}-\x^{\star}}+\dotprodsqr{\x_{k(j)}-\x^{\star}}{\d_{j}}{\matA}-2\dotprod{\x_{k(j)}-\x^{\star}}{\d_{j}}{\matA}\dotprod{\x_{j}-\x^{\star}}{\d_{j}}{\matA}\nonumber \\
 & = & \ANormS{\x_{j}-\x^{\star}}+\dotprodsqr{\x_{k(j)}-\x^{\star}}{\d_{j}}{\matA}\nonumber \\
 &  & \qquad-2\dotprod{\x_{k(j)}-\x^{\star}}{\d_{j}}{\matA}\left[\dotprod{\x_{j}-\x_{k(j)}}{\d_{j}}{\matA}+\dotprod{\x_{k(j)}-\x^{\star}}{\d_{j}}{\matA}\right]\nonumber \\
 & = & \ANormS{\x_{j}-\x^{\star}}-\dotprodsqr{\x_{k(j)}-\x^{\star}}{\d_{j}}{\matA}-2\dotprod{\x_{k(j)}-\x^{\star}}{\d_{j}}{\matA}\dotprod{\x_{j}-\x_{k(j)}}{\d_{j}}{\matA}\label{eq:baseline}
\end{eqnarray}
In the above we use the fact that $\matA$ has unit diagonal, so $\dotprod{\d_{i}}{\d_{i}}{\matA}=1$
for all $i$. We see that the error decreases by a positive ``progress
term'' ($\dotprodsqr{\x_{k(j)}-\x^{\star}}{\d_{j}}{\matA}$), and
it changes by an additional term ($2\dotprod{\x_{k(j)}-\x^{\star}}{\d_{j}}{\matA}\dotprod{\x_{k(j)}-\x_{j}}{\d_{j}}{\matA}$),
which might be positive or negative. When the iterations are synchronized
($k(j)=j$), there is no additional term, and the analysis reduces
to the analysis of synchronous randomized Gauss-Seidel.

We first bound the additional term:

\begin{eqnarray}
2\dotprod{\x_{k(j)}-\x^{\star}}{\d_{j}}{\matA}\dotprod{\x_{j}-\x_{k(j)}}{\d_{j}}{\matA} & = & 2\dotprod{\x_{k(j)}-\x^{\star}}{\d_{j}}{\matA}\dotprod{\sum_{t=k(j)}^{j-1}\gamma_{t}\d_{t}}{\d_{j}}{\matA}\nonumber \\
 & = & \sum_{t=k(j)}^{j-1}2\dotprod{\x_{k(j)}-\x^{\star}}{\d_{j}}{\matA}\dotprod{\x^{\star}-\x_{k(t)}}{\d_{t}}{\matA}\dotprod{\d_{t}}{\d_{j}}{\matA}\label{eq:cross}\\
 & \geq & -\sum_{t=k(j)}^{j-1}\left[\dotprodsqr{\x_{k(j)}-\x^{\star}}{\d_{j}}{\matA}\left|\dotprod{\d_{t}}{\d_{j}}{\matA}\right|+\dotprodsqr{\x_{k(t)}-\x^{\star}}{\d_{t}}{\matA}\left|\dotprod{\d_{t}}{\d_{j}}{\matA}\right|\right]\,.\nonumber 
\end{eqnarray}
Since $k(j)\leq t<j$:

\begin{eqnarray*}
\Expect{\dotprodsqr{\x_{k(j)}-\x^{\star}}{\d_{j}}{\matA}\left|\dotprod{\d_{t}}{\d_{j}}{\matA}\right|} & = & \Expect{\Expect{\dotprodsqr{\x_{k(j)}-\x^{\star}}{\d_{j}}{\matA}\left|\dotprod{\d_{t}}{\d_{j}}{\matA}\right|\middle|\,\d_{0},\dots,\d_{t-1}}}\\
 & = & \Expect{\frac{1}{n^{2}}\sum_{l=1}^{n}\sum_{r=1}^{n}\dotprodsqr{\x_{k(j)}-\x^{\star}}{\e^{(l)}}{\matA}\left|\dotprod{\e^{(l)}}{\e^{(r)}}{\matA}\right|}\\
 & = & \Expect{\frac{1}{n^{2}}\sum_{l=1}^{n}\sum_{r=1}^{n}\dotprodsqr{\x_{k(j)}-\x^{\star}}{\e^{(l)}}{\matA}\left|\matA_{lr}\right|}\\
 & \leq & \rho\Expect{\frac{1}{n}\sum_{l=1}^{n}\dotprodsqr{\x_{k(j)}-\x^{\star}}{\e^{(l)}}{\matA}}=\rho\Expect{\dotprodsqr{\x_{k(j)}-\x^{\star}}{\d_{j}}{\matA}}\,.
\end{eqnarray*}
Similarly, $\Expect{\dotprodsqr{\x_{k(t)}-\x^{\star}}{\d_{t}}{\matA}\left|\dotprod{\d_{t}}{\d_{j}}{\matA}\right|}\leq\rho\Expect{\dotprodsqr{\x_{k(t)}-\x^{\star}}{\d_{t}}{\matA}}$.
Taking expectation of~\eqref{eq:cross} and applying the last inequality
we find that
\begin{eqnarray*}
\Expect{2\dotprod{\x_{k(j)}-\x^{\star}}{\d_{j}}{\matA}\dotprod{\x_{j}-\x_{k(j)}}{\d_{j}}{\matA}} & \geq & -\rho\sum_{t=k(j)}^{j-1}\left[\Expect{\dotprodsqr{\x_{k(j)}-\x^{\star}}{\d_{j}}{\matA}}+\Expect{\dotprodsqr{\x_{k(t)}-\x^{\star}}{\d_{t}}{\matA}}\right]\\
 & = & -\rho|j-k(j)|\Expect{\dotprodsqr{\x_{k(j)}-\x^{\star}}{\d_{j}}{\matA}}-\rho\sum_{t=k(j)}^{j-1}\Expect{\dotprodsqr{\x_{k(t)}-\x^{\star}}{\d_{t}}{\matA}}\\
 & \geq & -\rho\tau\Expect{\dotprodsqr{\x_{k(j)}-\x^{\star}}{\d_{j}}{\matA}}-\rho\sum_{t=k(j)}^{j-1}\Expect{\dotprodsqr{\x_{k(t)}-\x^{\star}}{\d_{t}}{\matA}}\,.
\end{eqnarray*}
Taking expectation of \eqref{eq:baseline}, and plugging in the last
inequality, we find that
\begin{equation}
E_{j+1}\leq E_{j}-(1-\rho\tau)\Expect{\dotprodsqr{\x_{k(j)}-\x^{\star}}{\d_{j}}{\matA}}+\rho\sum_{t=k(j)}^{j-1}\Expect{\dotprodsqr{\x_{k(t)}-\x^{\star}}{\d_{t}}{\matA}}\,.\label{eq:err-recursion}
\end{equation}
Unrolling the recursion, we find that for every $m$:
\[
E_{m}\leq E_{0}-(1-\rho\tau)\sum_{i=0}^{m-1}\Expect{\dotprodsqr{\x_{k(i)}-\x^{\star}}{\d_{i}}{\matA}}+\rho\sum_{i=0}^{m-1}\sum_{t=k(i)}^{i-1}\Expect{\dotprodsqr{\x_{k(t)}-\x^{\star}}{\d_{t}}{\matA}}\,.
\]
In the last sum of the previous inequality ($\rho\sum_{i=0}^{m-1}\sum_{t=k(i)}^{i-1}\Expect{\dotprodsqr{\x_{k(t)}-\x^{\star}}{\d_{t}}{\matA}}$),
each term of the form $\Expect{\dotprodsqr{\x_{k(r)}-\x^{\star}}{\d_{r}}{\matA}}$
appears at most $\tau$ times, each time with a coefficient $\rho$.
So
\[
E_{m}\leq E_{0}-(1-2\rho\tau)\sum_{i=0}^{m-1}\Expect{\dotprodsqr{\x_{k(i)}-\x^{\star}}{\d_{i}}{\matA}}\,.
\]
We now apply the bound $\Expect{\dotprodsqr{\x_{k(i)}-\x^{\star}}{\d_{i}}{\matA}}\geq(\lambda_{\min}/n)E_{k(i)}$
(Lemma~\ref{lem:key}), to find that
\begin{equation}
E_{m}\leq E_{0}-\delta_{\min}\sum_{i=0}^{m-1}E_{k(i)}\,.\label{eq:ubound}
\end{equation}

\noindent \textbf{Proof of (a). }Lemma~\ref{lem:key} implies that
for any $b\geq a$ we have $E_{b}\geq\delta_{\max}^{b-a}E_{a}$. Indeed,

\begin{eqnarray*}
\ANormS{\x_{j+1}-\x^{\star}} & = & \ANormS{\x_{j}+\gamma_{j}\d_{j}-\x^{\star}}\\
 & = & \ANormS{\x_{j}-\x^{\star}}+\gamma_{j}^{2}+2\gamma_{j}\dotprod{\x_{j}-\x^{\star}}{\d_{j}}{\matA}\\
 & = & \ANormS{\x_{j}-\x^{\star}}+\dotprodsqr{\x_{k(j)}-\x^{\star}}{\d_{j}}{\matA}-2\dotprod{\x_{k(j)}-\x^{\star}}{\d_{j}}{\matA}\dotprod{\x_{j}-\x^{\star}}{\d_{j}}{\matA}\\
 & = & \ANormS{\x_{j}-\x^{\star}}+\dotprodsqr{\x_{k(j)}-\x_{j}+\x_{j}-\x^{\star}}{\d_{j}}{\matA}-2\dotprod{\x_{k(j)}-\x^{\star}}{\d_{j}}{\matA}\dotprod{\x_{j}-\x^{\star}}{\d_{j}}{\matA}\\
 & = & \ANormS{\x_{j}-\x^{\star}}+\dotprodsqr{\x_{k(j)}-\x^{\star}}{\d_{j}}{\matA}+\dotprodsqr{\x_{k(j)}-\x_{j}}{\d_{j}}{\matA}\\
 &  & \qquad+2\dotprod{\x_{j}-\x^{\star}}{\d_{j}}{\matA}\dotprod{\x_{k(j)}-\x_{j}}{\d_{j}}{\matA}\\
 &  & \qquad-2\dotprod{\x_{k(j)}-\x^{\star}}{\d_{j}}{\matA}\dotprod{\x_{j}-\x^{\star}}{\d_{j}}{\matA}\\
 & = & \ANormS{\x_{j}-\x^{\star}}+\dotprodsqr{\x_{j}-\x^{\star}}{\d_{j}}{\matA}+\dotprodsqr{\x_{k(j)}-\x_{j}}{\d_{j}}{\matA}\\
 &  & \qquad-2\dotprod{\x_{j}-\x^{\star}}{\d_{j}}{\matA}\dotprod{\x_{j}-\x^{\star}}{\d_{j}}{\matA}\\
 & = & \ANormS{\x_{j}-\x^{\star}}-\dotprodsqr{\x_{j}-\x^{\star}}{\d_{j}}{\matA}+\dotprodsqr{\x_{k(j)}-\x_{j}}{\d_{j}}{\matA}\\
 & \geq & \ANormS{\x_{j}-\x^{\star}}-\dotprodsqr{\x_{j}-\x^{\star}}{\d_{j}}{\matA}\,,
\end{eqnarray*}

\noindent so by taking expectation and applying Lemma~\ref{lem:key}
(notice that $\x_{k(j)}$ is independent of $\d_{j}$), we find that
$E_{j+1}\geq\delta_{\max}E_{j}$. In particular since $i\geq k(i)$,
\begin{equation}
E_{k(i)}\geq\delta_{\max}^{k(i)}E_{0}\geq\delta_{\max}^{i}E_{0}\,.\label{eq:lbound}
\end{equation}
Plugging~\eqref{eq:lbound} into \eqref{eq:ubound} we get the following
inequality, which leads immediately to assertion (a):
\[
E_{m}\leq\left(1-\delta_{\min}\sum_{i=0}^{m-1}\delta_{\max}^{i}\right)E_{0}=\left(1-\frac{\delta_{\min}(1-\delta_{\max}^{m})}{1-\delta_{\max}}\right)E_{0}=(1-\nu_{\tau}\kappa^{-1}(1-\delta_{\max}^{m}))E_{0}\,.
\]

\noindent \textbf{Proof of (b).} Let
\[
C_{i}=\left\{ rT+i-\tau\leq t\leq rT+i-1\,:\, t\geq rT\right\}
\]
and
\[
D_{i}=\left\{ rT+i-\tau\leq t\leq rT+i-1\,:\, t<rT\right\} \,.
\]

\noindent Unrolling the recursion in equation~\eqref{eq:err-recursion}
starting at $rT$, we find that for $r\geq1$ and $w\geq0$

\noindent
\begin{eqnarray}
E_{(r+1)T+w} & \leq & E_{rT}-(1-\rho\tau)\sum_{i=0}^{T-1+w}\Expect{\dotprodsqr{\x_{k(rT+i)}-\x^{\star}}{\d_{rT+i}}{\matA}}+\rho\sum_{i=0}^{T-1+w}\sum_{t=k(rT+i)}^{rT+i-1}\Expect{\dotprodsqr{\x_{k(t)}-\x^{\star}}{\d_{t}}{\matA}}\nonumber \\
 & \leq & E_{rT}-(1-\rho\tau)\sum_{i=0}^{T-1+w}\Expect{\dotprodsqr{\x_{k(rT+i)}-\x^{\star}}{\d_{rT+i}}{\matA}}+\rho\sum_{i=0}^{T-1+w}\sum_{t=rT+i-\tau}^{rT+i-1}\Expect{\dotprodsqr{\x_{k(t)}-\x^{\star}}{\d_{t}}{\matA}}\nonumber \\
 & \leq & E_{rT}-(1-\rho\tau)\sum_{i=0}^{T-1+w}\Expect{\dotprodsqr{\x_{k(rT+i)}-\x^{\star}}{\d_{rT+i}}{\matA}}+\rho\sum_{i=0}^{T-1+w}\sum_{t\in C_{i}}\Expect{\dotprodsqr{\x_{k(t)}-\x^{\star}}{\d_{t}}{\matA}}\nonumber \\
 &  & \qquad+\rho\sum_{i=0}^{T-1+w}\sum_{t\in D_{i}}\Expect{\dotprodsqr{\x_{k(t)}-\x^{\star}}{\d_{t}}{\matA}}\nonumber \\
 & \leq & E_{rT}-(1-2\rho\tau)\sum_{i=0}^{T-1+w}\Expect{\dotprodsqr{\x_{k(rT+i)}-\x^{\star}}{\d_{rT+i}}{\matA}}+\rho\sum_{i=0}^{\tau-1}\sum_{t\in D_{i}}\Expect{\dotprodsqr{\x_{k(t)}-\x^{\star}}{\d_{t}}{\matA}}\,.\nonumber \\
 & \leq & E_{rT}-(1-2\rho\tau)\sum_{i=\tau}^{T-1+w}\Expect{\dotprodsqr{\x_{k(rT+i)}-\x^{\star}}{\d_{rT+i}}{\matA}}+\rho\sum_{i=0}^{\tau-1}\sum_{t\in D_{i}}\Expect{\dotprodsqr{\x_{k(t)}-\x^{\star}}{\d_{t}}{\matA}}\,.\label{eq:bigequation}
\end{eqnarray}
The second-to-last inequality follows from the fact that each term
of the form $\Expect{\dotprodsqr{\x_{k(l)}-\x^{\star}}{\d_{l}}{\matA}}$
appears at most $\tau$ times in $\rho\sum_{i=0}^{T-1+w}\sum_{t\in C_{i}}\Expect{\dotprodsqr{\x_{k(t)}-\x^{\star}}{\d_{t}}{\matA}}$
. We also use the fact that for $i\geq\tau$ we trivially have $D_{i}=\emptyset$.

\noindent We first bound $E_{rT}-(1-2\rho\tau)\sum_{i=\tau}^{T-1+w}\Expect{\dotprodsqr{\x_{k(rT+i)}-\x^{\star}}{\d_{rT+i}}{\matA}}$.
Using Lemma~\ref{lem:key},
\[
E_{rT}-(1-2\rho\tau)\sum_{i=\tau}^{T-1+w}\Expect{\dotprodsqr{\x_{k(rT+i)}-\x^{\star}}{\d_{rT+i}}{\matA}}\leq E_{rT}-\delta_{\min}\sum_{i=\tau}^{T-1+w}E_{k(rT+i)}\,.
\]
Since $i\geq\tau$ we have $k(kT+i)\geq kT$ so $E_{k(rT+i)}\geq\delta_{\max}^{k(rT+i)-rT}E_{rT}\geq\delta_{\max}^{i}E_{rT}$.
Therefore
\[
E_{rT}-(1-2\rho\tau)\sum_{i=\tau}^{T-1+w}\Expect{\dotprodsqr{\x_{k(rT+i)}-\x^{\star}}{\d_{rT+i}}{\matA}}\leq(1-\delta_{\min}\delta_{\max}^{\tau}\sum_{i=0}^{T-1+w-\tau}\delta_{\max}^{i})E_{rT}\,.
\]
Noticing that $T-1+w-\tau=T_{0}-1+w$ and bounding the geometric sum
as in assertion (a), we find that $(1-\delta_{\min}\delta_{\max}^{\tau}\sum_{i=0}^{T-1+w-\tau}\delta_{\max}^{i})\leq(1-\frac{\delta_{\max}^{\tau}\nu_{\tau}}{2\kappa})$,
so
\begin{equation}
E_{rT}-(1-2\rho\tau)\sum_{i=\tau}^{T-1+w}\Expect{\dotprodsqr{\x_{k(rT+i)}-\x^{\star}}{\d_{rT+i}}{\matA}}\leq(1-\frac{\delta_{\max}^{\tau}\nu_{\tau}}{2\kappa})E_{rT}\,.\label{eq:Ekt_first}
\end{equation}
We now bound $\rho\sum_{i=0}^{\tau-1}\sum_{t\in D_{i}}\Expect{\dotprodsqr{\x_{k(t)}-\x^{\star}}{\d_{t}}{\matA}}$.
Recall that for every $b\geq a$ we have $E_{b}\geq\delta_{\max}^{b-a}E_{a},$
so, for $i=0,\dots,\tau-1$ and $t\in D_{i}$ we have
\[
E_{k(t)}\leq\delta_{\max}^{k(t)-rT}E_{rT}\leq\delta_{\max}^{-2\tau}E_{rT}\,.
\]
The last inequality follows from the fact that for $t\in D_{i}$,
$k(t)-rT\geq-2\tau$ and $\delta_{\max}<1$. We now bound
\[
\rho\sum_{i=0}^{\tau-1}\sum_{t\in D_{i}}\Expect{\dotprodsqr{\x_{k(t)}-\x^{\star}}{\d_{t}}{\matA}}\leq\rho\sum_{i=0}^{\tau-1}\sum_{t\in D_{i}}\frac{\lambda_{\max}\delta_{\max}^{-2\tau}}{n}E_{rT}\leq\frac{\rho\tau^{2}\lambda_{\max}\delta_{\max}^{-2\tau}}{n}E_{rT}=\chi E_{rT}\,.
\]
Combine the last inequality with~\eqref{eq:Ekt_first} and assertion
(a) to complete the proof of assertion (b).
\end{proof}
\textbf{Discussion:}
\begin{itemize}
\item Assertion (a) shows that after we perform enough asynchronous iterations,
we are guaranteed to reduce the expected error by a constant factor.
In order to drive the expected error down to an arbitrary fraction
of the input error, we can adopt the following scheme. We start with
asynchronous iterations. After $n$ iterations have been completed
we synchronize the threads and restart the iterations. The matrix
$\matA$ has unit diagonal, so $\lambda_{\max}\geq1$. Therefore,
by performing $k\geq n$ iterations, we are guaranteeing a $1-\nu_{\tau}/2\kappa$
factor reduction in the expected error. We then continue to iterate
and synchronize until the expected error is guaranteed to be small
enough. The number of outer iterations until convergence (reduction
of error by a predetermined factor) is $O(\kappa/\nu_{\tau})$. This
is also the number of synchronization points. When $\nu_{\tau}$ is
close to one, the number of synchronization points is asymptotically
the same as in Jacobi, but the convergence rate is that of  Gauss-Seidel.
Furthermore, we do not need to really divide the iterations between
processors (basically, every processor can do as many iterations as
it can, until synchronization) and it is not important to synchronize
exactly after $n$ iterations. So, from a practical perspective, a
time based scheme for synchronizing the processors should be sufficient,
and will not suffer from large wait times due to load imbalance.
\item Assertion (b) shows that even if we do not occasionally synchronize
the threads, we still get long-term linear convergence, but at a slower
rate. We say convergence is linear in the long term since we cannot
guarantee a diminishing bound in every iteration, but we can prove
a constant factor reduction over a large enough number of iterations.
\item The terms $\delta_{\max}^{\tau}$ and $\delta_{\max}^{-2\tau}$ that
appear in assertion (b) might seem problematic as they are exponential
in the number of processors (because $\tau=\Omega(P))$. However,
in our reference scenario this is not an issue because
$\lambda_{\max}=O(1)$ and $\tau\ll n$ (since we assume that $P\ll n$),
so $\delta_{\max}^{\tau}$ and $\delta_{\max}^{-2\tau}$ are actually
very close to $1$.
\item The number of iterations to guarantee a $1-\nu_{\tau}/2\kappa$ reduction
of expected error (as in assertion (a)) in synchronous randomized
Gauss-Seidel is approximately $\nu_{\tau}n/2\lambda_{\max}$. When
$\nu_{\tau}$ is close to one this is only slightly better than the
bound for AsyRGS, which is a small price to pay for the good speedups
expected for the asynchronous algorithm.
\item Consider our reference scenario in a weak-scaling regime (i.e., $P\approx cn$
for a very small $c$). In this case, $\nu_{\tau}$ is bounded by
a constant close to one because $\rho=O(1/n)$ in the reference scenario.
Therefore, with occasional synchronization of the threads,
the number of iterations increases by a small constant factor
due to asynchronism.
That is, the asynchronous phases do not violate the weak-scaling,
although the number of iterations can increase due to $\lambda_{\min}$
becoming smaller. As for the case where only asynchronous iterations
are used, we have $\chi\approx c^{2}\lambda_{\max}^{2}$. So, $\chi$
itself exhibits weak scaling. However, its value should be interpreted
with respect to $\kappa^{-1}$. If $\lambda_{\min}$ shrinks as $n$
grows, as is the case in many applications, then the relative size
of $\chi$ grows and we do not have weak scaling.
\item In general, if $\rho=O(1/n)$, we have $\nu_{\tau}=O(1)$ and the
discussion in the previous paragraph applies. Sparsity is not the
only scenario in which $\rho=O(1/n)$: for example, $\rho\leq2/n$
if $\matA$ is symmetric diagonally dominant, regardless of sparsity.
Other strong decay properties of off-diagonal entries might guarantee
$\rho=O(1/n)$ as well.
\end{itemize}

\section{\label{sec:step}Improving Scalability by Controlling Step-Size}

If we inspect~\eqref{eq:rgs-bound} we see that the best bound is 
attained for unit step-size. Griebel and Oswald introduced a step-size
since it is known that for certain applications over/under relaxations
converge faster~\cite{GO12}. In this section, we show that by 
controlling the step-size, it is possible to have a convergent method 
for any delay (as long as we set the step size small enough), unlike the
bound in Theorem~\ref{thm:main} which requires $2\rho\tau<1$. 
In addition, we show that by optimizing the step-size we can also improve the
scaling (dependence on $\tau$) in our bounds.

Our more general analysis starts with some simple algebraic
manipulations:

\begin{eqnarray}
\ANormS{\x_{j+1}-\x^{\star}} & = & \ANormS{\x_{j}+\beta\gamma_{j}\d_{j}-\x^{\star}}\nonumber \\
 & = & \ANormS{\x_{j}-\x^{\star}}+\ANormS{\beta\gamma_{j}\d_{j}}+2\dotprod{\x_{j}-\x^{\star}}{\beta\gamma_{j}\d_{j}}{\matA}\nonumber \\
 & = & \ANormS{\x_{j}-\x^{\star}}+\beta^{2}\gamma_{j}^{2}+2\beta\gamma_{j}\dotprod{\x_{j}-\x^{\star}}{\d_{j}}{\matA}\nonumber \\
 & = & \ANormS{\x_{j}-\x^{\star}}+\beta^{2}\dotprodsqr{\x_{k(j)}-\x^{\star}}{\d_{j}}{\matA}-2\beta\dotprod{\x_{k(j)}-\x^{\star}}{\d_{j}}{\matA}\dotprod{\x_{j}-\x^{\star}}{\d_{j}}{\matA}\nonumber \\
 & = & \ANormS{\x_{j}-\x^{\star}}+\beta^{2}\dotprodsqr{\x_{k(j)}-\x^{\star}}{\d_{j}}{\matA}\nonumber \\
 &  & \qquad-2\beta\dotprod{\x_{k(j)}-\x^{\star}}{\d_{j}}{\matA}\left[\dotprod{\x_{j}-\x_{k(j)}}{\d_{j}}{\matA}+\dotprod{\x_{k(j)}-\x^{\star}}{\d_{j}}{\matA}\right]\nonumber \\
 & = & \ANormS{\x_{j}-\x^{\star}}-\beta(2-\beta)\dotprodsqr{\x_{k(j)}-\x^{\star}}{\d_{j}}{\matA}-2\beta\dotprod{\x_{k(j)}-\x^{\star}}{\d_{j}}{\matA}\dotprod{\x_{j}-\x_{k(j)}}{\d_{j}}{\matA}\,.\label{eq:baseline-step}
\end{eqnarray}
As before, we continue with bounding the additional term.

\begin{eqnarray}
2\beta\dotprod{\x_{k(j)}-\x^{\star}}{\d_{j}}{\matA}\dotprod{\x_{j}-\x_{k(j)}}{\d_{j}}{\matA} & = & 2\beta\dotprod{\x_{k(j)}-\x^{\star}}{\d_{j}}{\matA}\dotprod{\sum_{t=k(j)}^{j-1}\beta\gamma_{t}\d_{t}}{\d_{j}}{\matA}\label{eq:add-consist}\\
 & = & \beta^{2}\sum_{t=k(j)}^{j-1}2\dotprod{\x_{k(j)}-\x^{\star}}{\d_{j}}{\matA}\dotprod{\x^{\star}-\x_{k(t)}}{\d_{t}}{\matA}\dotprod{\d_{t}}{\d_{j}}{\matA}\nonumber 
\end{eqnarray}
\vspace{-0.3in}
\[
\geq-\beta^{2}\sum_{t=k(j)}^{j-1}\left[\dotprodsqr{\x_{k(j)}-\x^{\star}}{\d_{j}}{\matA}\left|\dotprod{\d_{t}}{\d_{j}}{\matA}\right|\right.\left.+\dotprodsqr{\x_{k(t)}-\x^{\star}}{\d_{t}}{\matA}\left|\dotprod{\d_{t}}{\d_{j}}{\matA}\right|\right]\,.
\]
We see that the progress term is $O(\beta)$, but the additional term
is $O(\beta^{2}$). In synchronous randomized Gauss-Seidel the best
bound on the expected error is achieved with $\beta=1$, but for an
asynchronous computation the best bound is achieved with some $\beta<1$
(depending on $\tau$).

It is still the case that $E_{j+1}\geq\delta_{\max}E_{j}$. Indeed,
\begin{eqnarray*}
\x_{j+1} & = & \x_{j}+\beta\gamma_{j}\d_{j}\\
 & = & \x_{j}+\beta\dotprod{\x^{\star}-\x_{k(j)}}{\d_{j}}{\matA}\d_{j}\\
 & = & \x_{j}+\dotprod{\beta\x^{\star}-\beta\x_{k(j)}}{\d_{j}}{\matA}\d_{j}\\
 & = & \x_{j}+\dotprod{\x^{\star}-\y}{\d_{j}}{\matA}\d_{j}
\end{eqnarray*}
where $\y=(1-\beta)\x^{\star}+\x_{k(j)}$. Denote $\tilde{\gamma}_{j}=\dotprod{\x^{\star}-\y}{\d_{j}}{\matA}$.
Now,

\begin{eqnarray*}
\ANormS{\x_{j+1}-\x^{\star}} & = & \ANormS{\x_{j}+\tilde{\gamma}_{j}\d_{j}-\x^{\star}}\\
 & = & \ANormS{\x_{j}-\x^{\star}}+\ANormS{\tilde{\gamma}_{j}\d_{j}}+2\dotprod{\x_{j}-\x^{\star}}{\tilde{\gamma}_{j}\d_{j}}{\matA}\\
 & = & \ANormS{\x_{j}-\x^{\star}}+\tilde{\gamma}_{j}^{2}+2\tilde{\gamma}_{j}\dotprod{\x_{j}-\x^{\star}}{\d_{j}}{\matA}\\
 & = & \ANormS{\x_{j}-\x^{\star}}+\dotprodsqr{\y-\x^{\star}}{\d_{j}}{\matA}-2\dotprod{\y-\x^{\star}}{\d_{j}}{\matA}\dotprod{\x_{j}-\x^{\star}}{\d_{j}}{\matA}\\
 & = & \ANormS{\x_{j}-\x^{\star}}+\dotprodsqr{\y-\x_{j}+\x_{j}-\x^{\star}}{\d_{j}}{\matA}-2\dotprod{\y-\x^{\star}}{\d_{j}}{\matA}\dotprod{\x_{j}-\x^{\star}}{\d_{j}}{\matA}\\
 & = & \ANormS{\x_{j}-\x^{\star}}+\dotprodsqr{\x_{j}-\x^{\star}}{\d_{j}}{\matA}+\dotprodsqr{\y-\x_{j}}{\d_{j}}{\matA}\\
 &  & \qquad+2\dotprod{\x_{j}-\x^{\star}}{\d_{j}}{\matA}\dotprod{\y-\x_{j}}{\d_{j}}{\matA}\\
 &  & \qquad-2\dotprod{\y-\x^{\star}}{\d_{j}}{\matA}\dotprod{\x_{j}-\x^{\star}}{\d_{j}}{\matA}\\
 & = & \ANormS{\x_{j}-\x^{\star}}+\dotprodsqr{\x_{j}-\x^{\star}}{\d_{j}}{\matA}+\dotprodsqr{\y-\x_{j}}{\d_{j}}{\matA}\\
 &  & \qquad-2\dotprod{\x_{j}-\x^{\star}}{\d_{j}}{\matA}\dotprod{\x_{j}-\x^{\star}}{\d_{j}}{\matA}\\
 & = & \ANormS{\x_{j}-\x^{\star}}-\dotprodsqr{\x_{j}-\x^{\star}}{\d_{j}}{\matA}+\dotprodsqr{\y-\x_{j}}{\d_{j}}{\matA}\\
 & \geq & \ANormS{\x_{j}-\x^{\star}}-\dotprodsqr{\x_{j}-\x^{\star}}{\d_{j}}{\matA}\,.
\end{eqnarray*}
$E_{j+1}\geq\delta_{\max}E_{j}$ now follows by taking expectation
and applying Lemma~\ref{lem:key} (notice that $\y$ is independent
of $\d_{j}$),

Continuing along the lines of the proof of Theorem~\ref{thm:main}
(we omit the details), we have the following generalization of Theorem~\ref{thm:main}.

\begin{thm}
Consider iteration 
\eqref{eq:full-a-iteration} with $\beta\leq1$ for an
arbitrary starting vector $\x_{0}$, that is iteration~\eqref{eq:a-iteration} 
where $\d_{0},\d_{1},\dots$ are
i.i.d. vectors that take $\e^{(1)},\dots,\e^{(n)}$ with equal probability,
and $k(0),k(1),\dots$ are such that~\eqref{eq:delay} holds but
are independent of the random choices of $\d_{0},\d_{1},\dots$. Let
$\rho=\frac{1}{n}\InfNorm{\matA}=\max_{l}\left\{ \frac{1}{n}\sum_{r=1}^{n}\left|\matA_{lr}\right|\right\} $.
Provided that $2\beta-\beta^{2}-2\rho\tau\beta^{2}>0$, the following holds:
\begin{enumerate}
\item [(a)]For every $m\geq\frac{\log(1/2)}{\log(1-\lambda_{\max}/n)}\approx\frac{0.693n}{\lambda_{\max}}$
we have
\[
E_{m}\leq\left(1-\frac{\nu_{\tau}(\beta)}{2\kappa}\right)E_{0}\,,
\]
where
\[
\nu_{\tau}(\beta)=2\beta-\beta^{2}-2\rho\tau\beta^{2}
\]

\item [(b)]Let $T_{0}=\left\lceil \frac{\log(1/2)}{\log(1-\lambda_{\max}/n)}\right\rceil $
and $T=T_{0}+\tau$. For every $m\geq rT$ ($r=1,2,\dots$ ) we have
\[
E_{m}\leq\left(1-\frac{\nu_{\tau}(\beta)}{2\kappa}\right)\left(1-\frac{\nu_{\tau}(\beta)(1-\lambda_{\max}/n)^{\tau}}{2\kappa}+\chi(\beta)\right)^{r-1}E_{0}
\]
where
\[
\chi(\beta)=\frac{\rho\tau^{2}\beta^{2}\lambda_{\max}(1-\lambda_{\max}/n)^{-2\tau}}{n}\,.
\]

\end{enumerate}
\end{thm}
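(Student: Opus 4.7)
The plan is to mirror the proof of Theorem~\ref{thm:main} step by step, carrying the extra factors of $\beta$ through every estimate and then invoking the two modifications already established in the running text (the identity~\eqref{eq:baseline-step} and the lower bound $E_{j+1}\geq\delta_{\max}E_j$ that was re-derived for the $\beta$-step iteration via the auxiliary vector $\y=(1-\beta)\x^\star+\x_{k(j)}$).

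First, I would take expectation in~\eqref{eq:baseline-step}, then bound the cross term using~\eqref{eq:add-consist}. Applying the same $\rho$-based estimate used in the proof of Theorem~\ref{thm:main}, namely $\Expect{\dotprodsqr{\x_{k(j)}-\x^\star}{\d_j}{\matA}|\dotprod{\d_t}{\d_j}{\matA}|}\leq\rho\Expect{\dotprodsqr{\x_{k(j)}-\x^\star}{\d_j}{\matA}}$ and its analogue with indices swapped, yields the per-iteration recursion
\[
E_{j+1}\leq E_j-\beta(2-\beta-\rho\tau\beta)\Expect{\dotprodsqr{\x_{k(j)}-\x^\star}{\d_j}{\matA}}+\rho\beta^2\sum_{t=k(j)}^{j-1}\Expect{\dotprodsqr{\x_{k(t)}-\x^\star}{\d_t}{\matA}}.
\]
Unrolling this recursion over $m$ steps and using the combinatorial observation that each term $\Expect{\dotprodsqr{\x_{k(r)}-\x^\star}{\d_r}{\matA}}$ appears at most $\tau$ times in the double sum, each multiplied by $\rho\beta^2$, consolidates the coefficients into exactly $\nu_\tau(\beta)=2\beta-\beta^2-2\rho\tau\beta^2$. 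Lemma~\ref{lem:key} then gives $E_m\leq E_0-\delta_{\min}(\beta)\sum_{i=0}^{m-1}E_{k(i)}$ where $\delta_{\min}(\beta)=\nu_\tau(\beta)\lambda_{\min}/n$; this is precisely the analogue of~\eqref{eq:ubound}. The hypothesis $\nu_\tau(\beta)>0$ is exactly what is needed to ensure this counting step produces a nontrivial contraction.

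For assertion~(a), I would invoke the lower bound $E_{k(i)}\geq\delta_{\max}^iE_0$ (which the excerpt already proves for step-size $\beta\leq 1$), sum the geometric series $\sum_{i=0}^{m-1}\delta_{\max}^i$, and verify that for $m\geq T_0$ the factor $(1-\delta_{\max}^m)$ is at least $1/2$. This yields the claimed bound $E_m\leq(1-\nu_\tau(\beta)/(2\kappa))E_0$ verbatim as in Theorem~\ref{thm:main}(a), with the new $\nu_\tau(\beta)$.

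For assertion~(b), I would repeat the partition of the inner summation range $[k(rT+i),rT+i-1]$ into the two sets $C_i$ and $D_i$ defined in the proof of Theorem~\ref{thm:main}, and bound the two contributions separately. The $C_i$-contribution combines with the progress term to yield the same $(1-\delta_{\max}^\tau\nu_\tau(\beta)/(2\kappa))E_{rT}$ factor, using the same geometric-sum manipulation as before. The $D_i$-contribution, coming from $\rho\beta^2\sum_{i=0}^{\tau-1}\sum_{t\in D_i}\Expect{\dotprodsqr{\x_{k(t)}-\x^\star}{\d_t}{\matA}}$, is bounded in exactly the same way using $E_{k(t)}\leq\delta_{\max}^{-2\tau}E_{rT}$ and Lemma~\ref{lem:key}, except that the overall $\beta^2$ from the cross term survives, giving $\chi(\beta)=\rho\tau^2\beta^2\lambda_{\max}\delta_{\max}^{-2\tau}/n$. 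Induction on $r$, seeded by assertion~(a), then finishes the proof.

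The main obstacle is purely bookkeeping: verifying that the $\beta$-factor tracking through the cross-term estimate produces exactly the coefficient $2\beta-\beta^2-2\rho\tau\beta^2$ after consolidation (rather than, say, $\beta(2-\beta)-2\rho\tau\beta^2$, which happens to coincide), and that the $\beta^2$ multiplier appears in $\chi(\beta)$ but not in the progress-term contraction factor. No new conceptual ingredient is required beyond what is already deployed in Theorem~\ref{thm:main} together with the identity~\eqref{eq:baseline-step} and the lower-bound argument using $\y$ that the excerpt establishes before the theorem statement.
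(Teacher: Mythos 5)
Your proposal is correct and takes essentially the same approach as the paper, which in fact omits the details entirely, stating only that one continues along the lines of the proof of Theorem~\ref{thm:main} using exactly the three ingredients you deploy: the identity~\eqref{eq:baseline-step}, the cross-term bound~\eqref{eq:add-consist}, and the lower bound $E_{j+1}\geq\delta_{\max}E_{j}$ established via the auxiliary vector $\y=(1-\beta)\x^{\star}+\x_{k(j)}$. Your $\beta$-bookkeeping is accurate throughout --- the per-iteration coefficient $\beta(2-\beta)-\rho\tau\beta^{2}$ consolidates under the unrolling/counting step to $\nu_{\tau}(\beta)=2\beta-\beta^{2}-2\rho\tau\beta^{2}$, and the extra $\beta^{2}$ survives only in $\chi(\beta)$, matching the stated theorem.
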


\textbf{Discussion:}
\begin{itemize}
\item We see that for a sufficiently small $\beta$ both bounds are useful,
but the computation of the optimal $\beta$ for assertion (b) (in
terms of the bound) requires some approximation of the condition number.
\item Alternatively, we can optimize only the value of $\nu_{\tau}(\beta)$.
The optimum of that term is achieved at $\tilde{\beta}=1/(1+2\rho\tau)$
and yields $\nu_{\tau}(\tilde{\beta})=1/(1+2\rho\tau)$. It is also
the case that $\chi(\tilde{\beta})<\chi(1)$, so both bounds are improved.
From a practical perspective, the challenge of setting the step size
to $\tilde{\beta}$ is that $\tau$ might not be known. However, under
normal circumstances (and in the reference scenario) we have $\tau=O(P)$,
which can provide a general guideline for setting the step-size.
\end{itemize}

\section{\label{sec:without}Convergence Bound with Inconsistent Reads}

We now analyze the iteration under the inconsistent read model, i.e. 
iteration~\eqref{eq:full-inc-iteration}. 
\begin{thm}
\label{thm:also}Consider iteration \eqref{eq:full-inc-iteration}
for some $0\leq\beta<1$ and an arbitrary starting vector $\x_{0}$,
where $\d_{0},\d_{1},\dots$ are i.i.d. vectors that take $\e^{(1)},\dots,\e^{(n)}$
with equal probability, and $K(0),K(1),\dots$ are such that equation~\eqref{eq:delay-K}
holds but are independent of the random choices of $\d_{0},\d_{1},\dots$.
Let $\rho_{2}=\max_{l}\left\{ \frac{1}{n}\sum_{r=1}^{n}\matA_{lr}^{2}\right\} $.
Provided that $\beta(1-\beta-\rho_{2}\tau^{2}\beta/2)>0$, the following
holds:
\begin{enumerate}
\item [(a)]For every $m\geq\frac{\log(1/2)}{\log(1-\lambda_{\max}/n)}\approx\frac{0.693n}{\lambda_{\max}}$
we have
\[
E_{m}\leq\left(1-\frac{\omega_{\tau}(\beta)}{2\kappa}\right)E_{0}
\]
where
\[
\omega_{\tau}(\beta)=2\beta(1-\beta-\rho_{2}\tau^{2}\beta/2)~.
\]

\item [(b)]Let $T_{0}=\left\lceil \frac{\log(1/2)}{\log(1-\lambda_{\max}/n)}\right\rceil $
and $T=T_{0}+\tau$. For every $m\geq rT$ ($r=1,2,\dots$ ) we have
\[
E_{m}\leq\left(1-\frac{\omega_{\tau}(\beta)}{2\kappa}\right)\left(1-\frac{\omega_{\tau}(\beta)(1-\lambda_{\max}/n)^{\tau}}{2\kappa}+\psi(\beta)\right)^{r-1}E_{0}
\]
where
\[
\psi(\beta)=\frac{\rho_{2}\tau^{3}\beta^{2}\lambda_{\max}(1-\lambda_{\max}/n){}^{-2\tau}}{n}\,.
\]

\end{enumerate}
\end{thm}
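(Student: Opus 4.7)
The plan is to follow the two-phase template of Theorem~\ref{thm:main} combined with the step-size treatment of Section~\ref{sec:step}, modified to accommodate two new features of the inconsistent-read model: $\x_{K(j)}$ need never have existed in memory, and the ``missed'' updates $\x_j-\x_{K(j)}$ are indexed by an arbitrary subset of $\{j-\tau,\ldots,j-1\}$ rather than a contiguous tail. The starting identity is~\eqref{eq:baseline-step} with $K(j)$ in place of $k(j)$.

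The first and main technical change is the way I bound the cross term $-2\beta\dotprod{\x_{K(j)}-\x^{\star}}{\d_{j}}{\matA}\dotprod{\x_{j}-\x_{K(j)}}{\d_{j}}{\matA}$: instead of the symmetric splitting of the consistent-read proof, I apply $|2\beta uv|\le\beta^{2}u^{2}+v^{2}$, so that the $\beta^{2}u^{2}$ summand combines with the progress term to leave $2\beta(1-\beta)\,u^{2}$. Writing $v=\sum_{l\in M(j)}\beta\gamma_{l}\dotprod{\d_{l}}{\d_{j}}{\matA}$ with $M(j)=\{0,\ldots,j-1\}\setminus K(j)\subseteq\{j-\tau,\ldots,j-1\}$ and applying Cauchy--Schwarz gives $v^{2}\le\tau\beta^{2}\sum_{l\in M(j)}\gamma_{l}^{2}\dotprodsqr{\d_{l}}{\d_{j}}{\matA}$. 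If $\d_{l}=\e^{(s)}$, then conditioning on $\d_{0},\dots,\d_{j-1}$ yields $\Expect{\dotprodsqr{\d_{l}}{\d_{j}}{\matA}}=(1/n)\sum_{r}\matA_{sr}^{2}\le\rho_{2}$, which is precisely what upgrades $\rho$ to $\rho_{2}$ (the square replaces the absolute value of the consistent-read argument) and why an extra factor of $\tau$ enters the bound. Combining and taking expectations gives
\[
E_{j+1}\le E_{j}-2\beta(1-\beta)\Expect{\dotprodsqr{\x_{K(j)}-\x^{\star}}{\d_{j}}{\matA}}+\rho_{2}\tau\beta^{2}\sum_{l\in M(j)}\Expect{\dotprodsqr{\x_{K(l)}-\x^{\star}}{\d_{l}}{\matA}};
\]
unrolling, using that each index $l$ belongs to $M(j)$ for at most $\tau$ values of $j$, yields $E_{m}\le E_{0}-\omega_{\tau}(\beta)\sum_{i=0}^{m-1}\Expect{\dotprodsqr{\x_{K(i)}-\x^{\star}}{\d_{i}}{\matA}}$.

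The second modification is the lower bound $\Expect{\ANormS{\x_{K(i)}-\x^{\star}}}\ge\delta_{\max}^{i}E_{0}$ needed to convert the above into a contraction via Lemma~\ref{lem:key}. I would obtain it by viewing $\x_{K(i)}$ as the endpoint of a synthetic iteration that applies the updates in $K(i)$ in increasing index order; since $\gamma_{k}$ depends only on $\d_{l}$ with $l<k$, each step has the form analyzed in Section~\ref{sec:step} and the inequality $E_{j+1}\ge\delta_{\max}E_{j}$ proved there transfers verbatim, giving $\delta_{\max}^{|K(i)|}E_{0}\ge\delta_{\max}^{i}E_{0}$. The geometric-series calculation of Theorem~\ref{thm:main}(a) then gives part~(a) directly.

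For part~(b) I would adapt the block-of-length-$T$ argument of Theorem~\ref{thm:main}(b), separating the in-block contributions (absorbed by the same ``extra $\tau$'' budget that produced $\omega_{\tau}(\beta)$) from the boundary contributions indexed by $D_{i}\subseteq\{l\in M(rT+i):l<rT\}$, which must be charged directly to $E_{rT}$. Each boundary term costs $\rho_{2}\tau\beta^{2}$ (one extra $\tau$ compared to the consistent case, courtesy of the Cauchy--Schwarz step), there are at most $\tau$ values of $i$ with $D_{i}\ne\emptyset$, each $D_{i}$ has at most $\tau$ elements, and each term is bounded via Lemma~\ref{lem:key} and the synthetic-trajectory lower bound by $(\lambda_{\max}/n)\delta_{\max}^{-2\tau}E_{rT}$; multiplying yields exactly $\psi(\beta)=\rho_{2}\tau^{3}\beta^{2}\lambda_{\max}\delta_{\max}^{-2\tau}/n$. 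The main obstacle I anticipate is the bookkeeping in this step: in the consistent case the missed window is a contiguous interval $[k(j),j-1]$, whereas $M(j)$ is only known to sit inside $\{j-\tau,\ldots,j-1\}$, so I must verify carefully that the double sum over $i$ and $l\in M(rT+i)$ can be reorganized to touch each index at most $\tau$ times, and that the independence conditions needed to invoke Lemma~\ref{lem:key} and the synthetic-trajectory lower bound on $\x_{K(l)}$ remain clean under the non-contiguous boundary decomposition.
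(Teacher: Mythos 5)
Your proposal is correct and follows essentially the same route as the paper's proof sketch: the single quadratic splitting of the cross term followed by Cauchy--Schwarz (precisely the device the paper uses in~\eqref{eq:add-inconsist} to sidestep the dependence of $\x_{K(j)}$ on $\d_{t}$ that would break the symmetric term-by-term splitting of the consistent-read proof), the conditional-expectation computation that yields $\rho_{2}$, the unrolling that charges each index at most $\tau$ times to produce $\omega_{\tau}(\beta)$, and the view of $\x_{K(i)}$ as the endpoint of $|K(i)|$ random single-coordinate steps to obtain $\Expect{\ANormS{\x_{K(i)}-\x^{\star}}}\geq\delta_{\max}^{i}E_{0}$. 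The block argument you outline for part (b), including the $\tau^{3}$ accounting that gives $\psi(\beta)$, matches what the paper intends, which itself leaves those details as ``continue to bound as in the proof of Theorem~\ref{thm:main}.''
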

Most of the proof is analogous to the proof of Theorem~\ref{thm:main},
so we give only a sketch that focuses on the unique parts.

\begin{proof}
\emph{(Sketch) }As before:
\[
\ANormS{\x_{j+1}-\x^{\star}}=\ANormS{\x_{j}-\x^{\star}}-\beta(2-\beta)\dotprodsqr{\x_{K(j)}-\x^{\star}}{\d_{j}}{\matA}-2\beta\dotprod{\x_{K(j)}-\x^{\star}}{\d_{j}}{\matA}\dotprod{\x_{j}-\x_{K(j)}}{\d_{j}}{\matA}\,.
\]
We now bound the additional term:

\begin{eqnarray}
2\beta\dotprod{\x_{K(j)}-\x^{\star}}{\d_{j}}{\matA}\dotprod{\x_{j}-\x_{K(j)}}{\d_{j}}{\matA} & = & 2\beta\dotprod{\x_{K(j)}-\x^{\star}}{\d_{j}}{\matA}\dotprod{\sum_{t\in K^{-}(j)}\beta\gamma_{t}\d_{t}}{\d_{j}}{\matA}\nonumber \\
 & = & 2\beta^{2}\dotprod{\x_{K(j)}-\x^{\star}}{\d_{j}}{\matA}\dotprod{\sum_{t\in K^{-}(j)}\gamma_{t}\d_{t}}{\d_{j}}{\matA}\nonumber \\
 & \geq & -\beta^{2}\left[\dotprodsqr{\x_{K(j)}-\x^{\star}}{\d_{j}}{\matA}+\dotprodsqr{\sum_{t\in K^{-}(j)}\gamma_{t}\d_{t}}{\d_{j}}A\right]\label{eq:add-inconsist}\\
 & \geq & -\beta^{2}\left[\dotprodsqr{\x_{K(j)}-\x^{\star}}{\d_{j}}{\matA}+\right.\nonumber \\
 &  & \qquad\left.|K^{-}(j)|\sum_{t\in K^{-}(j)}\dotprodsqr{\x_{K(t)}-\x^{\star}}{\d_{t}}{\matA}\dotprodsqr{\d_{t}}{\d_{j}}{\matA}\right]\nonumber \\
 & \geq & -\beta^{2}\left[\dotprodsqr{\x_{K(j)}-\x^{\star}}{\d_{j}}{\matA}+\tau\sum_{t\in K^{-}(j)}\dotprodsqr{\x_{K(t)}-\x^{\star}}{\d_{t}}{\matA}\dotprodsqr{\d_{t}}{\d_{j}}{\matA}\right]\nonumber
\end{eqnarray}
where $K^{-}(j)=\{0,\dots,j-1\}-K(j)$. Since $\x_{K(t)}$ does not
depend on $\d_{t}$ or $\d_{j}$, we can bound as before,
\[
\Expect{\dotprodsqr{\x_{K(t)}-\x^{\star}}{\d_{t}}{\matA}\dotprodsqr{\d_{t}}{\d_{j}}{\matA}}\leq\rho_{2}\Expect{\dotprodsqr{\x_{K(t)}-\x^{\star}}{\d_{t}}{\matA}}\,.
\]
Therefore,
\[
E_{j+1}\leq E_{j}-2\beta(1-\beta)\Expect{\dotprodsqr{\x_{K(j)}-\x^{\star}}{\d_{j}}{\matA}}+\rho_{2}\tau\beta^{2}\sum_{t\in K^{-}(j)}\Expect{\dotprodsqr{\x_{K(t)}-\x^{\star}}{\d_{t}}{\matA}}\,.
\]
After we unroll the recursion, we find that
\[
E_{k}\leq E_{0}-2\beta(1-\beta-\rho_{2}\tau^{2}\beta/2)\sum_{i=0}^{k-1}\Expect{\dotprodsqr{\x_{K(i)}-\x^{\star}}{\d_{i}}{\matA}}\,.
\]
We can now continue to bound as in the proof of Theorem~\ref{thm:main}.
The crucial observation is that $\x_{K(i)}$ is the result of $|K(i)|$
random single coordinate steps. So
\[
\Expect{\Vert\x_{K(i)}-\x^{\star}\Vert_{\matA}^{2}}\ge\delta_{\max}^{|K(i)|}E_{0}\geq\delta_{\max}^{i}E_{0}\,.
\]

\end{proof}
\textbf{Discussion:}
\begin{itemize}
\item Unlike the bounds for the consistent read model, the theorem guarantees
convergence only for values of $\beta$ strictly smaller than $1$.

\item The bound has a worse dependence on $\tau$, so scalability is worse. The bound
for consistent read has also a better dependence on the step-size ($\beta$) in the sense
that large step-sizes are allowed.  In contrast, due to the unit diagonal assumption,
we always have $\rho_2 \leq \rho$, so the bound enjoys a more favorable dependence in that 
respect. However, we note that $\rho_2 \geq 1/n$, so if $\rho = O(1/n)$ then the ratio between 
$\rho$ and $\rho_2$ is bounded by a constant, so the better dependence on $\tau$ makes the 
bound for consistent read more favorable (as we expect).

\item The reason why we develop equation~\eqref{eq:add-inconsist} instead
of simply adapting equation~\eqref{eq:add-consist} for the inconsistent
read iteration is that the latter equation leads to expressions of
the form $\dotprodsqr{\x_{K(j)}-\x^{\star}}{\d_{j}}{\matA}\left|\dotprod{\d_{t}}{\d_{j}}{\matA}\right|$
for $t\in\{0,\dots,j-1\}-K(j)$. Such an expression is hard to analyze
since $\x_{K(j)}$ can depend of $\d_{t}$. An example is $K(j)=\{0,\dots,j-3,j-1\}$
and $t=j-2$ (for some $j\geq3$).
\end{itemize}

\section{\label{sec:LS}Unsymmetric Systems and Overdetermined Least-Squares}

In this section, we consider the more general problem of finding the
solution to $\min_{\x}\|\mat A\x-\b\|_{2}$ where $\mat A\in\mathbb{R}^{r\times n}$
has at least as many rows as columns and is full rank. Note that this
problem includes the solution of $\mat A\x=\b$ for a general (possibly
unsymmetric) non singular $\mat A$. For simplicity, we will assume
that the columns of $\mat A$ have unit Euclidean norm.

Lewis and Leventhal~\cite{LL10} analyzed this case as well, and
suggest the following iteration

\begin{equation}
\begin{alignedat}{1}\rb_{j} & =\,\b-\mat A\x_{j}\\
\gamma_{j} & =\,\d_{j}^{\T}\mat A^{\T}\rb_{j}\\
\x_{j+1} & =\,\x_{j}+\gamma_{j}\d_{j}
\end{alignedat}
\label{eq:sck}
\end{equation}
where $\d_{0},\d_{1},\dots$ are i.i.d. random vectors, taking $\e^{(1)},\dots,\e^{(n)}$
with equal probability. One can show that this is a stochastic coordinate
descent method on $f(\x)=\|\mat A\x-\b\|_{2}$. Lewis and Leventhal
prove that
\[
\Expect{\TNormS{\mat A\x_{m}-\b}}\leq\left(1-\frac{\lambda_{\min}(\matA^{\T}\mat A)}{n}\right)^{m}\TNormS{\matA\x_{0}-\b}\,,
\]
where in the above $\lambda_{\min}(\matA^{\T}\mat A)$ is the minimum
eigenvalue of $\mat A^{\T}\matA$.

It is rather straightforward to devise an asynchronous version of
the algorithm, following the same strategy we used for AsyRGS. We
only note that traditional presentations of~\eqref{eq:sck} favor
keeping a residual vector $\rb$ in memory, and updating after each
update on $\x$. However, updates to $\rb$ cannot be atomic, so in
an asynchronous version of the iteration, the necessary entries of
the residual (i.e., the non-zero indices of $\matA\d_{j}$) have to be 
computed in each iteration. Introducing
a step-size $\beta$ as well, leads to the following iteration that
describes the asynchronous algorithm (inconsistent read):
\begin{equation}
\begin{alignedat}{1}\gamma_{j} & =\,\d_{j}^{\T}\mat A^{\T}(\b-\mat A\x_{K(j)})\\
\x_{j+1} & =\,\x_{j}+\beta\gamma_{j}\d_{j}
\end{alignedat}
\label{eq:asck}
\end{equation}

We remark that each iteration of the asynchronous algorithm that implements~\eqref{eq:asck}
is more expensive then the best implementation of~\eqref{eq:sck}. The main cost per step 
of the asynchronous algorithm is in computing $\gamma_j$. Suppose that the row vector 
$\d_{j}^{\T}\mat A^{\T}$ (which is simply a transposed column of $\matA$) has $l_j$ non-zeros, and that the 
rows corresponding the non-zero indices in $\d_{j}^{\T}\mat A^{\T}$ have 
$r_{j,1},\dots,r_{j,l_j}$ non-zeros. The cost of iteration $j$ of~\eqref{eq:asck} is then $O(\sum^{l_j}_{i=1}r_{j,i})$.
As explained, for~\eqref{eq:sck}, we can have cheaper steps by keeping both $\x$ and $\rb$ in memory. 
Each iteration then updates both $\x$ and $\rb$ leading to a cost per iteration of 
$O(l_j)$. Since all rows have at least one non-zero, this is obviously superior. It is important
to note that while for some matrices the additional cost might be disastrous, it is not necessarily 
the case. For example, if the number of non-zeros in each column and row is bounded between $C_1$ and $C_2$,
each iteration is at most $O(C^2_2/C_1)$ more expensive. If both $C_2$ and $C_2/C_1$  are small, this ratio
might be small enough to be overcome by a sufficient amount of processors (in particular, if $C_1=O(1)$ and 
$C_2=O(1)$ then the asymptotic cost per iteration is unaffected).

We also remark that for~\eqref{eq:asck} to be a valid description of the behavior of the algorithm,
care has to be taken that in each iteration each entry of $\x$ that is read, is read only {\em once}.

Notice that~\eqref{eq:asck} is identical to the iteration of AsyRGS
on $\mat A^{\T}\matA\x=\matA^{\T}\b$. This leads immediately to the
following theorem.
\begin{thm}
\label{thm:also-1}Consider iteration \eqref{eq:asck} for some $0\leq\beta<1$
and an arbitrary starting vector $\x_{0}$, where $\d_{0},\d_{1},\dots$are
i.i.d. vectors that take $\e^{(1)},\dots,\e^{(n)}$ with equal probability,
and $K(0),K(1),\dots$ are such that equation~\eqref{eq:delay-K}
holds but are independent of the random choices of $\d_{0},\d_{1},\dots$.
Let $\mat X=\matA^{\T}\matA$ and let $\rho_{2}=\max_{l}\left\{ \frac{1}{n}\sum_{r=1}^{n}\mat X_{lr}^{2}\right\} $.
Let $\x^\star = \arg \min_{\x}\|\mat A\x-\b\|_{2}$.
Denote by $\kappa$ the condition number of $\matA$ (ratio between
largest and smallest singular values of $\mat A$) and by $\sigma_{\max}$
the maximum singular value of $\mat A$. Provided that $\beta(1-\beta-\rho_{2}\tau^{2}\beta/2)>0$,
the following holds:
\begin{enumerate}
\item [(a)]For every $m\geq\frac{\log(1/2)}{\log(1-\sigma_{\max}^{2}/n)}\approx\frac{0.693n}{\sigma_{\max}^{2}}$
we have
\[
\Expect{\XNormS{\x_{m}-\x^\star}}\leq\left(1-\frac{\omega_{\tau}(\beta)}{2\kappa^{2}}\right)\XNormS{\x_{0}-\x^\star}
\]
where
\[
\omega_{\tau}(\beta)=2\beta(1-\beta-\rho_{2}\tau^{2}\beta/2)
\]

\item [(b)]Let $T_{0}=\left\lceil \frac{\log(1/2)}{\log(1-\sigma_{\max}^{2}/n)}\right\rceil $
and $T=T_{0}+\tau$. For every $m\geq rT$ ($r=1,2,\dots$ ) we have
\[
\Expect{\XNormS{\x_{m}-\x^\star}}\leq\left(1-\frac{\omega_{\tau}(\beta)}{2\kappa^{2}}\right)\left(1-\frac{\omega_{\tau}(\beta)(1-\sigma_{\max}^{2}/n)^{\tau}}{2\kappa^{2}}+\chi(\beta)\right)^{r-1}\XNormS{\x_{0}-\x^\star}
\]
where
\[
\chi(\beta)=\frac{\rho_{2}\tau^{3}\beta^{2}\sigma_{\max}^{2}(1-\sigma_{\max}^{2}/n){}^{-2\tau}}{n}\,.
\]

\end{enumerate}

Note that when $\matA \x^\star = \b$ we have $\XNormS{\x_{m}-\x^\star} = \TNormS{\matA \x_m - \b}$ and $\XNormS{\x_{0}-\x^\star} = \TNormS{\matA \x_0 - \b}$.
\end{thm}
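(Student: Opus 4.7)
The plan is to reduce the theorem directly to Theorem~\ref{thm:also} by exploiting the observation, already made in the text preceding the statement, that iteration~\eqref{eq:asck} is precisely the AsyRGS iteration~\eqref{eq:full-inc-iteration} applied to the normal equations $\mat X \x = \matA^{\T}\b$ with $\mat X = \matA^{\T}\matA$.

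First I would verify the identification rigorously. Writing $\b' = \matA^{\T}\b$ and using the characterization $\mat X \x^\star = \b'$ of the least-squares minimizer, a direct calculation gives
\[
\gamma_j = \d_j^{\T}\matA^{\T}(\b - \matA \x_{K(j)}) = \d_j^{\T}(\b' - \mat X \x_{K(j)}) = \dotprod{\x^\star - \x_{K(j)}}{\d_j}{\mat X}~,
\]
so iteration~\eqref{eq:asck} has exactly the form of~\eqref{eq:full-inc-iteration} with $\matA$ there replaced by $\mat X$ here. The matrix $\mat X$ is symmetric positive definite because $\matA$ has full column rank and has unit diagonal because the columns of $\matA$ are assumed to have unit norm, so the setup of Theorem~\ref{thm:also} is in force. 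Assumptions A-1, A-3 and A-4, together with the independence of $K(0),K(1),\dots$ from $\d_0,\d_1,\dots$, carry over without modification.

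Next I would translate the spectral parameters. The eigenvalues of $\mat X$ are the squares of the singular values of $\matA$, so $\lambda_{\min}(\mat X) = \sigma_{\min}^{2}$, $\lambda_{\max}(\mat X) = \sigma_{\max}^{2}$, and the condition number of $\mat X$ equals $\kappa^{2}$. The parameter $\rho_2$ in the statement of Theorem~\ref{thm:also-1} is already defined with respect to $\mat X$, matching exactly the $\rho_2$ that appears when Theorem~\ref{thm:also} is applied to $\mat X$. Substituting these identifications into the two bounds of Theorem~\ref{thm:also}, with $E_m$ there corresponding to $\Expect{\XNormS{\x_m - \x^\star}}$ here, yields assertions (a) and (b) verbatim.

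The closing remark follows from
\[
\XNormS{\x - \x^\star} = (\x - \x^\star)^{\T}\matA^{\T}\matA(\x - \x^\star) = \TNormS{\matA(\x - \x^\star)}~,
\]
which equals $\TNormS{\matA\x - \b}$ whenever $\matA \x^\star = \b$. There is no substantive obstacle; essentially the only care required is checking that the unit-column-norm normalization of $\matA$ translates into the unit-diagonal hypothesis on $\mat X$ needed by Theorem~\ref{thm:also}, and that the definition of $\rho_2$ produced by that reduction agrees with the one given in the statement.
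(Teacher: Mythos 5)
Your proposal is correct and is exactly the paper's argument: the paper proves Theorem~\ref{thm:also-1} precisely by noting that iteration~\eqref{eq:asck} coincides with the AsyRGS iteration of Theorem~\ref{thm:also} applied to the normal equations $\matA^{\T}\matA\x=\matA^{\T}\b$, with the unit-column-norm assumption giving $\mat X=\matA^{\T}\matA$ unit diagonal and the spectral substitutions $\lambda_{\max}(\mat X)=\sigma_{\max}^{2}$, $\kappa(\mat X)=\kappa^{2}$ yielding the stated bounds. You merely make explicit the routine verifications (the formula for $\gamma_{j}$, the matching of $\rho_{2}$, and the residual identity when $\matA\x^{\star}=\b$) that the paper leaves to the reader.
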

\section{\label{sec:experiments}Experiments}

The main goals of this section are threefold. First, we show that
the proposed algorithm can be advantageous for certain types of linear
systems, such as those arising in the analysis of big data, even in
the absence of massive parallelism. Secondly, we explore the behavior
of the algorithm in terms of scalability and the penalty paid for
asynchronicity for these type of matrices. Finally, we demonstrate
that our algorithm can serve as an effective preconditioner for a
flexible Krylov method.

It is not the goal of this section to show that the suggested algorithm
converges faster than standard algorithms like CG for all, or many,
matrices. The synchronous method on which our algorithm is based requires
$O(\kappa\cdot n)$ iterations for convergence, which are equivalent
to about $O(\kappa)$ CG iterations. In comparison, CG converges at
a much better rate of $O(\sqrt{\kappa})$. Therefore, for a general
purpose solver, our algorithm might be advantageous only as a preconditioner
in a Krylov subspace method that can handle a preconditioner that
changes from one step to another. Such methods are known as\emph{
flexible} Krylov subspace methods~\cite{Saad93,Notay00,SS03}. While
we explore this combination, it is not the purpose of this paper to
present a general purpose production-grade linear solver. For such
a solver, sophisticated rules for setting the preconditioner parameters,
and various heuristics will be needed to avoid a completely random
access pattern which is likely to cause poor performance due to extensive
cache misses, just to name a few of the issues that need to be resolved
for a production-grade linear solver. Exploration of these issues
is slated for continued future research, and is outside the scope
of this paper.

We experiment with a linear system arising from performing linear
regression to analyze social media data. The system arises from a
real-life data analysis task performed on real data. The matrix $\matA$
is $120{,}147\times120{,}147$ (after removing rows and columns that
were identically zero) and it has $172{,}892{,}749$ non-zeros. 
The matrix is generated by computing the Gram matrix of a data matrix
in which each row corresponds to a text document and the values
are the term frequencies inside the document. 
The maximum number of non-zeros in a row is $117{,}182$, so some rows
are almost full. However, the row sizes are highly skewed as the average
number of non-zeros in a row is $1439$, and the minimum is $1$.
The right-hand side $\mat B$ has $51$ columns, each corresponding to a label
prediction. For both CG and our
algorithm, the $120{,}147\times51$ right-hand side and solution matrices
are stored in a row-major fashion to improve locality. All $51$ systems
are solved together. We initialize the solution to an all zero matrix.
 The matrix does not have unit-diagonal, so for
Randomized Gauss-Seidel and AysRGS we use iteration~\eqref{eq:basic-iteration-nonunit}.
The coefficient matrix has very little to no
structure. This implies that reordering $\matA$ in order to reduce
cache misses in the matrix-vector multiplications has very little
effect. Luckily, the downstream application requires very low accuracy.
In particular, running Randomized Gauss-Seidel beyond $10$ sweeps
has negligible improvement in the downstream use (as measured in the
application specific metric), even though the residual continues to
drop.

For this matrix $\rho \approx 231 / n$ and $\rho_2 \approx 8.9 / n$,
so the parameters $\nu_\tau$ and $\omega_\tau$ that govern the slowdown
with occasional synchronization are not too bad (e.g., $\nu_{200}(1.0) = 0.618$
and $\omega_{200}(0.25) = 0.1906$). However matrix is highly ill-conditioned 
(as verified using an iterative condition-number estimator~\cite{ADT13})
so the bounds without occasional synchronization do not apply unless we set
the step size extremely small. In that respect, we note that the theoretical
bounds for the synchronous algorithm are already far from being descriptive
of the behavior of that algorithm, at least for a small amount of sweeps. 

We remark that we chose to experiment with this matrix even though it does not fit the reference scenario, which was the main focus of our theoretical analysis.  We wanted to experiment with an application for which our algorithm makes sense as a standalone algorithm, instead of the case where it is applied as a building block in a larger solver, which is the case for most matrices from scientific computing applications. In addition, we note that the high imbalance in the row size only makes this matrix more challenging for an asynchronous solver (the maximum delay $\tau$ might be large). The analysis presented in this paper is not able to cover effectively such high variance in row sizes, but we conjecture that the results can be strengthened to cover it, and propose a strategy in the conclusions. Finally, by choosing this test case, we are able to show that in practice, the algorithm can be suitable even for situations outside the scope of our analysis.

\begin{figure}[t]
\noindent \begin{centering}
\begin{tabular}{c}
\includegraphics[width=0.45\textwidth]{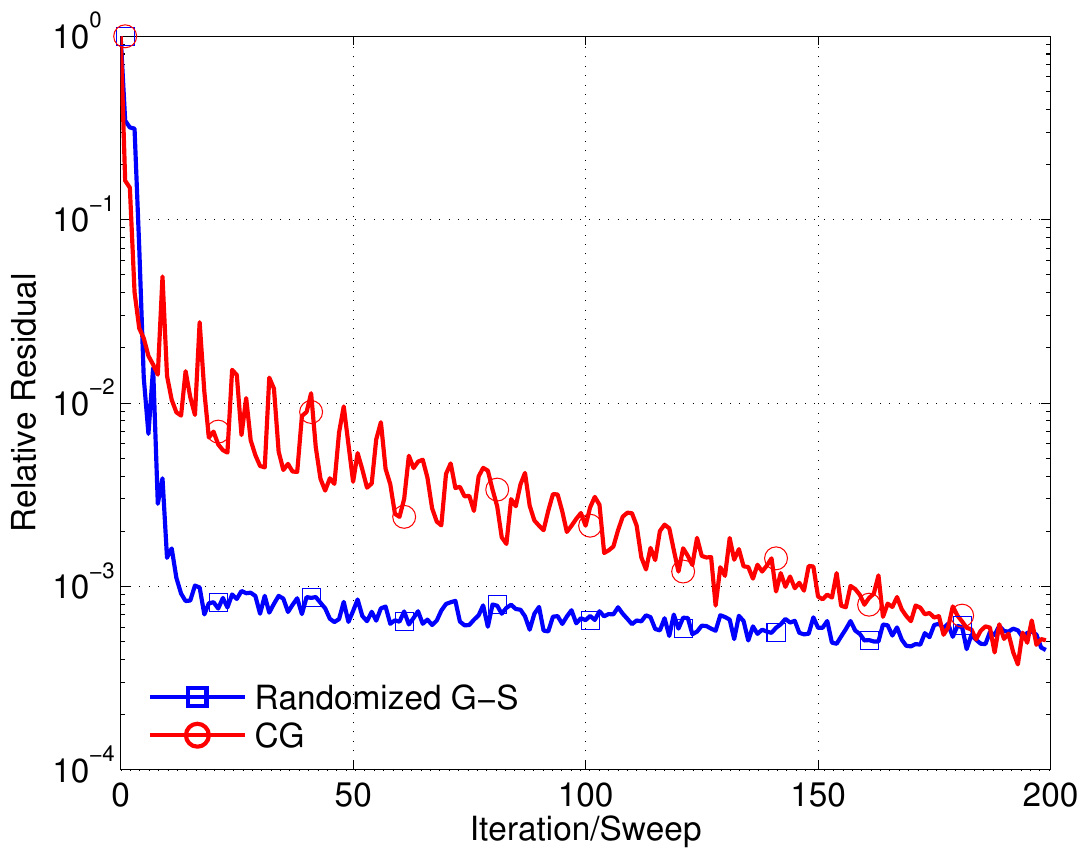}\tabularnewline
\end{tabular}
\par\end{centering}

\protect\caption{\label{fig:res}Residual of Randomized Gauss-Seidel and CG on the
test matrix. }
\end{figure}

Figure~\ref{fig:res} plots the residual ($\FNorm{\matA\mat X-\mat B}/\FNorm{\mat B}$)
of Randomized Gauss-Seidel and CG as the iterations progress. We see
that Randomized Gauss-Seidel initially progresses faster than CG.
% and in a more predictable manner. 
This suggests that Randomized Gauss-Seidel,
and its asynchronous variants, might be well suited as a preconditioner
in a flexible Krylov method. We remark that the behavior of CG can
be improved with preconditioning.

\begin{figure}[t]
\noindent \begin{centering}
\begin{tabular}{ccc}
\includegraphics[width=0.3\textwidth]{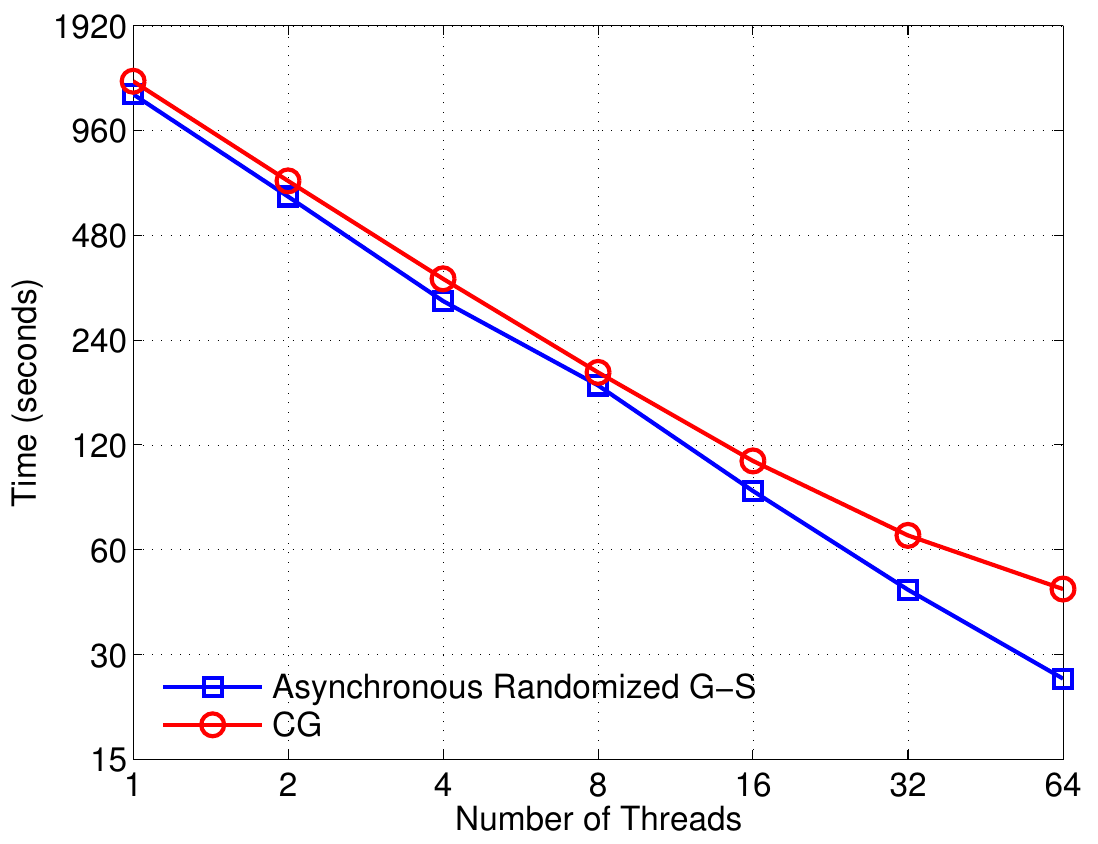} &  \includegraphics[width=0.3\textwidth]{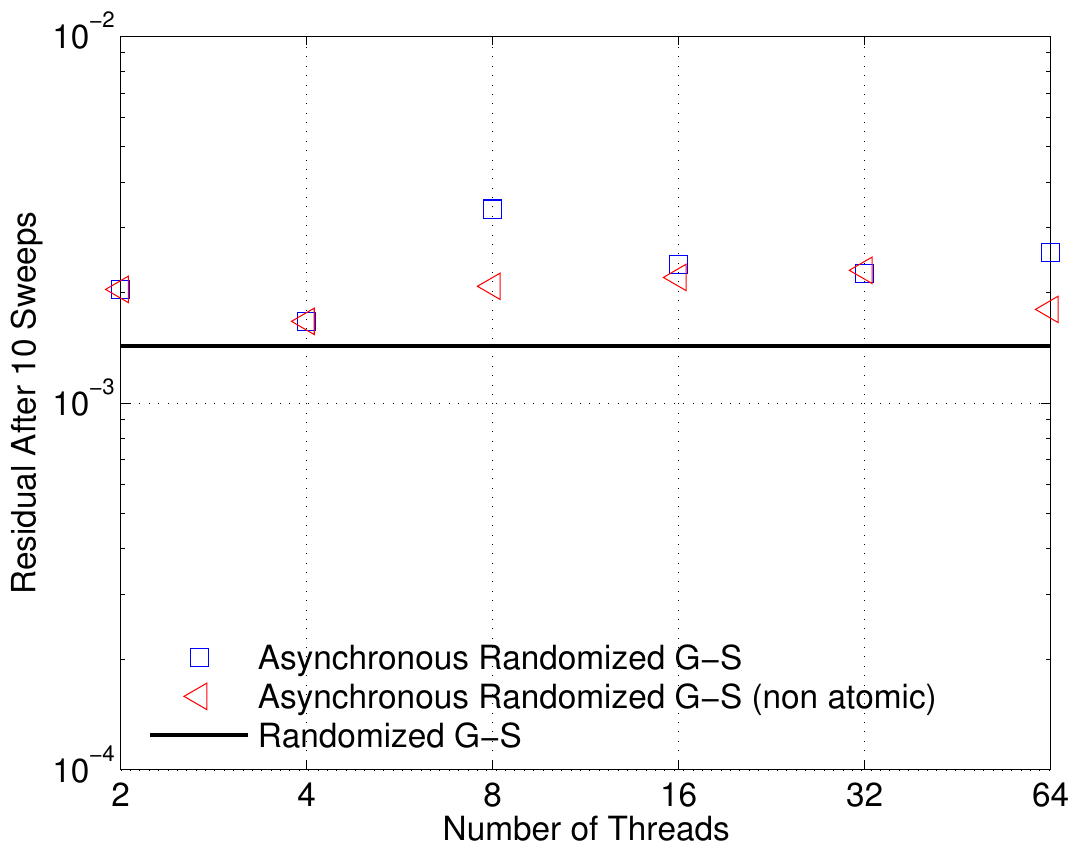} &  \includegraphics[width=0.3\textwidth]{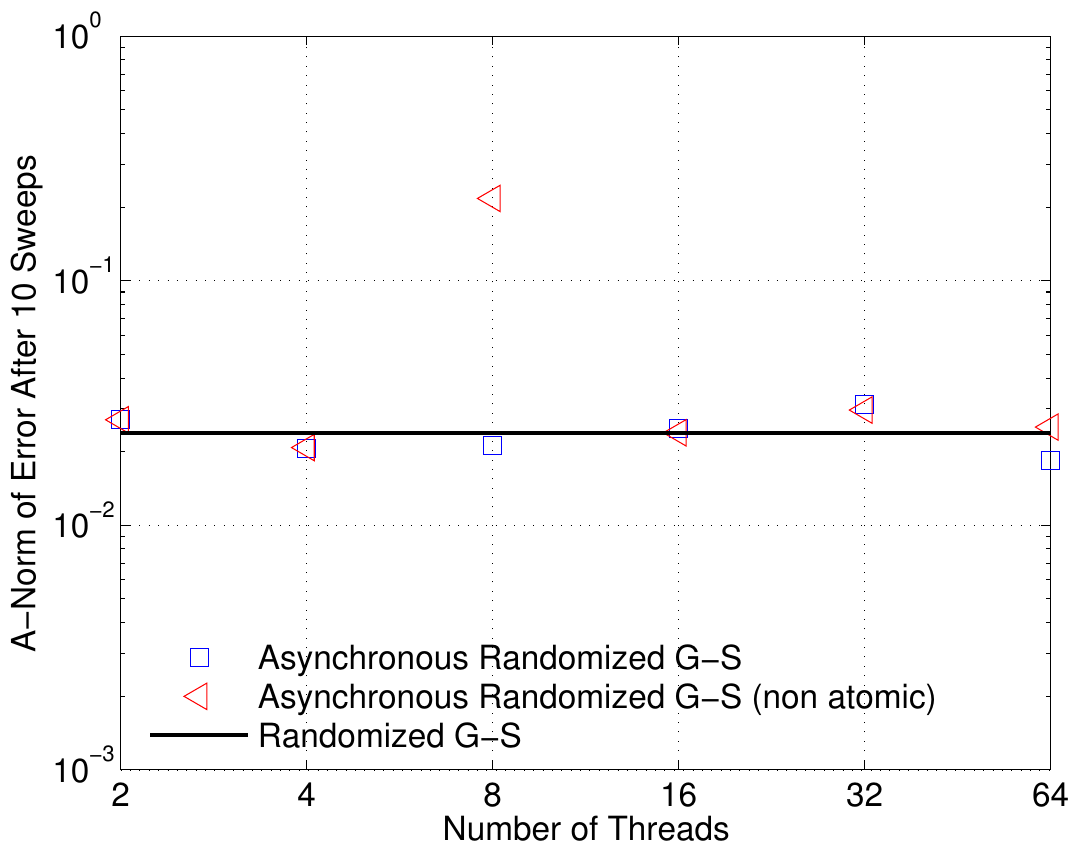} \tabularnewline
\end{tabular}
\par\end{centering}

\protect\caption{\label{fig:args}Performance of AsyRGS (inconsistent read) on the
test matrix. Left: Running time of both AsyRGS and CG on the test
matrix. Center: Relative residual after 10 sweeps of Randomized Gauss-Seidel
and AsyRGS. Right: Relative $\matA$-norm of the error.}
\end{figure}

We tested parallel performance on a single BlueGene/Q node. The compute
node has 16 compute cores (and an additional one for services) running
at 1.6 GHz, each capable of 4-way multithreading. We experimented
with the inconsistent read variant only. In Figure~\ref{fig:args}
(left), we plot the running time of $10$ iterations (sweeps) of AsyRGS
and CG. We use a SIMD variant of CG where the indices are assigned
to threads in a round-robin manner. The use of round-robin is due
to the fact that the coefficient matrix has very little to no structure,
so other distribution methods give very little benefit while incurring
a large overhead. We see that AsyRGS shows almost linear scalability,
and attains a speedup of almost $48$ on $64$ threads. CG initially
shows good speedups as well, but strays from linear speedup as the
thread count grows. In the serial run, Randomized Gauss-Seidel was
about $10\%$ faster ($1220$ seconds versus $1330$ seconds for CG).
With $64$ threads the gap is substantial: $25.7$ seconds versus
$46.5$ seconds for CG. The speedup of CG on $64$ threads is less
than $29$.

Next, we explore whether there is a price, in terms of the final residual,
for using an asynchronous version. To that end, we made sure that
the set of directions $\d_{0},\d_{1},\dots$ is fixed using the library
Random123~\cite{SMDS11} which allows random access to the pseudo-random
numbers, as opposed to the conventional streamed approach. Here we
also try a variant of AsyRGS which does not perform atomic writes,
in order to test experimentally whether atomic writes are necessary
(from a theoretical point-of-view, so far we have not been able to
analyze the convergence rate without atomic writes). In Figure~\ref{fig:args}
(center), we plot the residual after $10$ sweeps of a single run on
each thread count. We see that the residual of the asynchronous algorithm
is slightly worse than that of the synchronous method, although it
is of the same order of magnitude. There does not seem to be a consistent
advantage to using atomic writes. There is also variation in the residual
due to different scheduling of the threads, so we conducted $5$ additional
trials with $64$ threads. The minimum residual of AsyRGS was $1.44\times10^{-3}$
and the maximum was $2.88\times10^{-3}$. With the non-atomic variant,
the minimum was $1.39\times10^{-3}$ and the maximum was $2.96\times10^{-3}$.
There is no noticeable difference between the running time of the
two variants.

In Figure~\ref{fig:args} (right) we examine the relative $\matA$-norm of the error
($\ANorm{\x - \x^\star} / \ANorm{\x^\star}$) after $10$ sweeps for different thread counts. 
In this experiment, we  use a single right hand side that is generated as follows: we took one
of the right hand sides of the original problem, solved to very low residual
(using Flexible-CG, as explained in the following paragraphs) to form $\x^\star$, and 
then used $\b = \matA \x^\star$. We see that the $\matA$-norm of the error for AsyRGS
is very close to the $\matA$-norm of the error of the synchronous method, and
is sometimes better (we caution that we did a single experiment, and there is
variation in the error produced by the asynchronous method). Both the synchronous 
and the asynchronous method produce errors that are well below the theoretical 
bounds for the synchronous method.

\begin{table}
\protect\caption{\label{tab:args-fcg}Performance of Flexible-CG with AsyRGS (inconsistent
read) serving as a preconditoner. We explore the performance when
varying the number of inner (preconditioner) sweeps. $64$ threads
are used. As runs are not deterministic, we report the median of five
different runs. }

\centering{}%
\begin{tabular}{|c|c|c|c|c|}
\hline
Inner sweeps & Outer iterations & Outer $\times$ (Inner + 1) & Time & Mat-ops / sec\tabularnewline
\hline
\hline
30 & 38 & 1178 & 234 sec & 5.03\tabularnewline
\hline
20 & 48 & 1008 & 203 sec & 4.97\tabularnewline
\hline
10 & 69 & 759 & 159 sec & 4.77\tabularnewline
\hline
5 & 100 & 600 & 132 sec & 4.55\tabularnewline
\hline
3 & 151 & 604 & 134 sec & 4.51\tabularnewline
\hline
\textbf{2} & \textbf{184} & \textbf{552} & \textbf{125 sec} & \textbf{4.42}\tabularnewline
\hline
1 & 356 & 712 & 162 sec & 4.39\tabularnewline
\hline
\end{tabular}
\end{table}

Our final set of experiments explore the use of AsyRGS as a preconditioner
for a flexible Krylov method. We implemented Notay's Flexible-CG algorithm~\cite{Notay00}.
In our implementation we do not use truncation or restarts, although
we acknowledge that a general purpose production-grade solver might
require these. We compute the norm of the residual after every iteration
of Flexible-CG, and declare convergence once the relative residual
has dropped below some predefined threshold. In our experiments we
use $10^{-8}$. Iteration counts and running time are reported based
on convergence. We use AsyRGS as the preconditioner, with the number
of sweeps set as a parameter. As AsyRGS introduces non-determinism,
we repeat every experiment five times, and report the median (we note
that the random choices are fixed in these five runs, and non-determinism
is only due to asynchronism). We use only a single right-hand side
vector in this set of experiments.

\begin{figure}[t]
\noindent \begin{centering}
\begin{tabular}{ccc}
\includegraphics[width=0.45\textwidth]{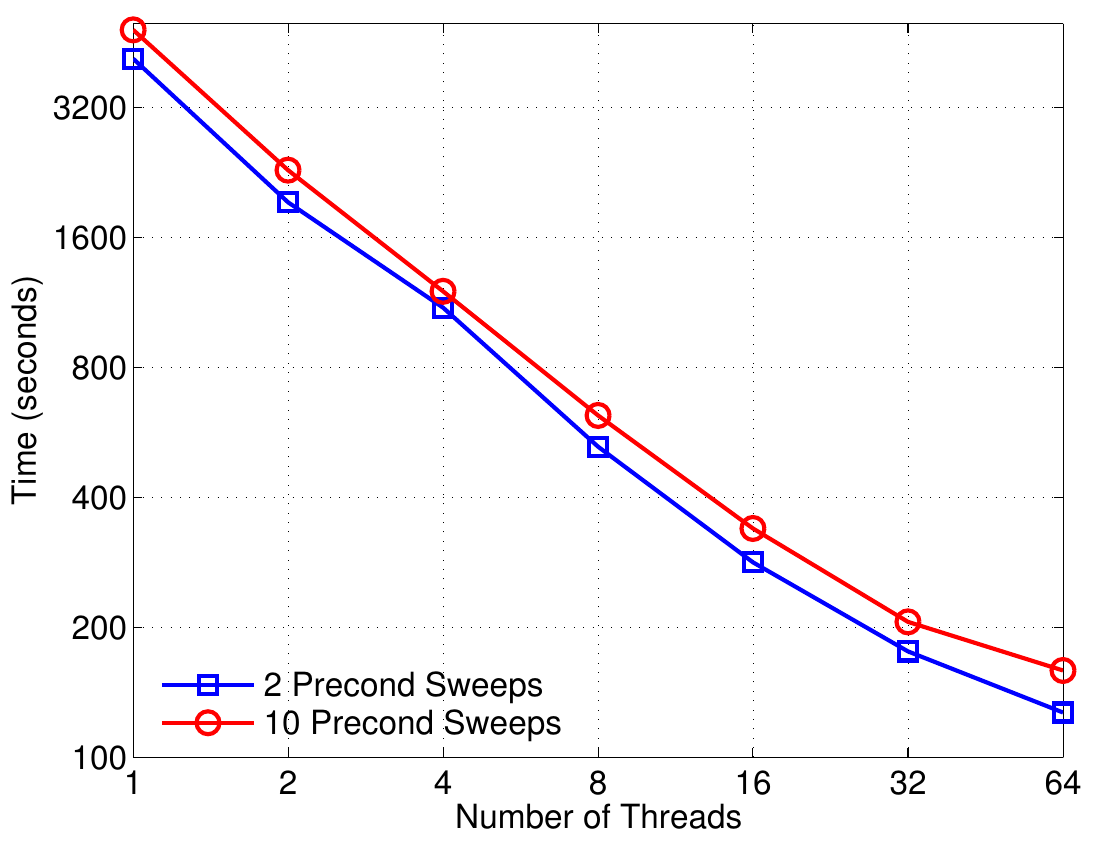} & ~ & \includegraphics[width=0.45\textwidth]{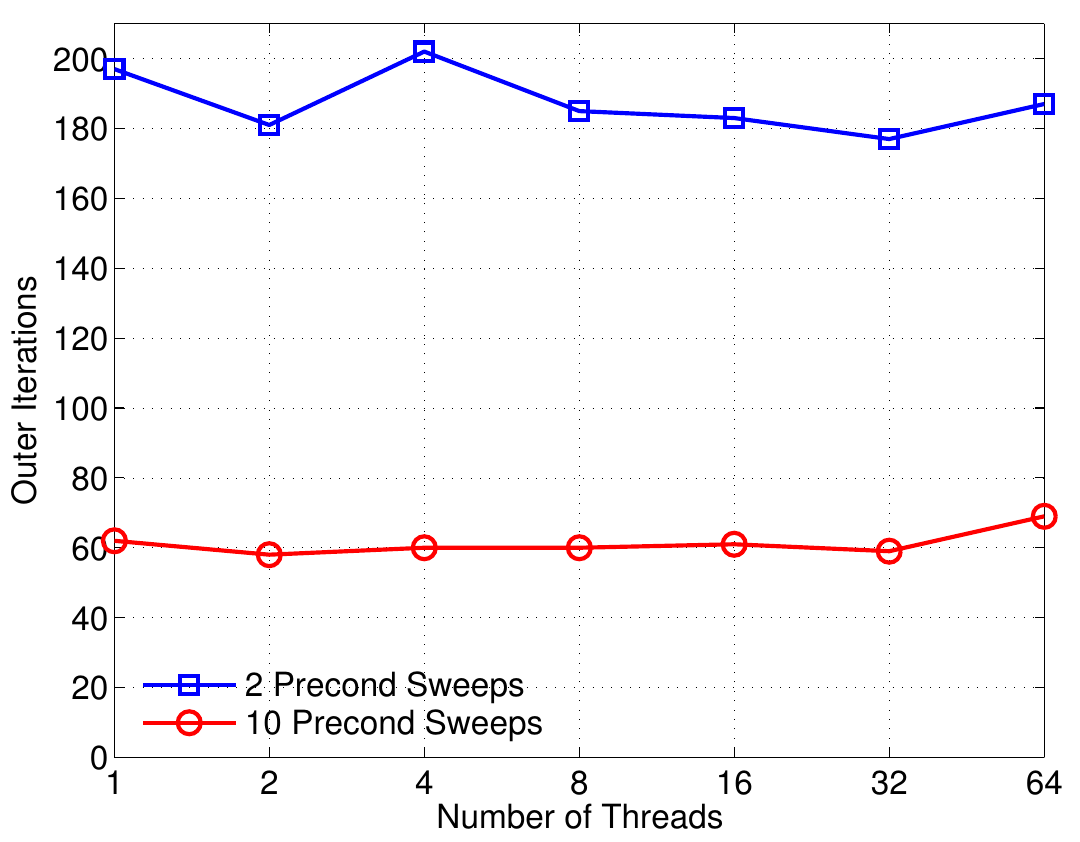}\tabularnewline
\end{tabular}
\par\end{centering}

\protect\caption{\label{fig:args-fcg}Parallel performance of Flexible-CG preconditioned
using AsyRGS (inconsistent read) on the test matrix. Left: Running
time of the algorithm, with 2 or 10 sweeps per preconditioner application,
on the test matrix. Right: Number of outer (Flexible-CG) iterations. }
\end{figure}

There is a trade-off in setting the number of AsyRGS sweeps to use
as a preconditiner. As we increase the number of preconditioner sweeps,
the preconditioner improves so we expect the number of outer (Flexible-CG)
iterations to decrease. On the other hand, AsyRGS is not as effective
on utilizing residual information as CG, which enjoys a superior converge
rate compared to Randomized Gauss-Seidel, so we expect the number
of times the matrix is operated on, which is equal to OuterIterations
$\times$ (InnerSweeps + 1), to increase as we increase the number
of preconditioner sweeps. On the flip side, as we increase the number
of preconditioner sweeps, we expect parallel efficiency to increase
as well since we are diverting work to AsyRGS, which enjoys better
parallel efficiency. Table~\ref{tab:args-fcg} explores this trade-off.
In this experiment we vary the number of inner preconditioner sweeps,
and run Flexible-CG to convergence. $64$ threads are used. As we
expect, the number of outer iterations decreases as the number of
inner sweeps increases, but the overall times the matrix is operated
on increases (with the notable exception of using a single inner sweep).
Improved efficiency is demonstrated in the ``Mat-ops / sec'' columns
(which is equal to OuterIterations $\times$ (InnerSweeps + 1) / Time):
we see that as the number of inner sweeps increase, we operate on
the matrix at a higher rate. Nevertheless, the optimal number of sweeps
(in terms of running time) is obtained with only two inner sweeps.

Parallel performance is explored in Figure~\ref{fig:args-fcg}. In
the left graph we plot the running time as a function of the number
of threads, for two configurations: 2 inner sweeps and 10 inner sweeps.
We see that the method exhibits good speedups, with speedup of more than
$32$ for 2 inner sweeps, and $30$ for 10 inner sweeps. On the surface
it appears that the two inner sweeps configuration enjoys better scalability,
which is counter to our intuition that diverting more work to the
asynchronous iterations should improve parallel efficiency. However,
the running time is measured until convergence, and the number of
(outer) iterations is also a function of the number of threads. This
is explored in the right graph of Figure~\ref{fig:args-fcg}. While
intuitively the number of iterations should grow with the number of
threads, as the quality of the preconditioner should degrade due to
increased asynchronism, that is not observed in practice. We do see
higher variability in the number of iterations with 2 inner sweeps.
Possibly the reason for these observations is that the random choices
made by the algorithm are more dominant than asynchronism in determining
convergence. We now note that the speedup in terms of mat-ops / sec
for 10 inner sweeps is almost $34$, versus only $28$ for 2 inner
sweeps, which is consistent with the intuition that diverting more
work to the asynchronous iterations should improve parallel efficiency.

\section{\label{sec:conclusions}Conclusions and Future Work}

As we push forward toward exascale systems, it is becoming imperative
to revisit asynchronous linear solvers as a means of addressing the
limitations foreseen by current hardware trends. This paper serves
as a starting point for this revisit. Our main observation is that
the limitations of previous asynchronous linear solvers can be addressed
by a new class of asynchronous methods based on randomization. Our
analytical results clearly show the advantage of using randomization
as a building block for asynchronous solvers.

While we do present experimental results that show the usefulness
of our algorithm for certain types of linear systems, it is also clear
that much needs to be done for a general purpose solver. One clear
path, which we only started to explore, is the use of our algorithm
as a preconditioner in a flexible Krylov method. Another, is to extend
our algorithm from a shared memory system with limited parallelism
to massively parallel systems.

There are some theoretical questions that need to be explored too.
Is that gap in the bound for consistent and inconsistent reads inherent,
or an improved analysis can remove or narrow it? Is it possible to
obtain comparable bounds when we allow $k(j)$ or $K(j)$ to depend
on $\d_{0},\dots,\d_{j}$? In our reference scenario, we show weak-scaling
only if we periodically synchronize the threads. It is worth investigating
whether the periodic synchronization is essential, or if it is an artifact
of the analysis. In addition, in our analysis the convergence rate
depends on the maximum age of data used during the algorithm. The
maximum can be rather large in some setups (e.g., high ratio between
maximum and minimum amount of non-zeros per row), but the use of the
maximum is also rather pessimistic (the analysis assumes that the
maximum delay is almost achieved). Perhaps a probabilistic modeling
of the delays might lead to a convergence result that will be
more descriptive for matrices with imbalanced row sizes.

\section*{Acknowledgments}

Thanks to Vikas Sindhwani for providing the matrix used in the experiments.
Haim Avron acknowledges the support from the XDATA program of the
Defense Advanced Research Projects Agency (DARPA), administered through
Air Force Research Laboratory contract FA8750-12-C-0323.

\bibliographystyle{plain}
\bibliography{AvronDruinskyGupta15}

\end{document}